\documentclass[11pt]{article}
\usepackage[margin=1in]{geometry}
\usepackage[colorlinks=true, allcolors=blue]{hyperref}

\usepackage{fixmath}
\usepackage{bm}
\usepackage{amsbsy}
\usepackage{color}
\usepackage{verbatim}
\usepackage{multirow}
\usepackage{amssymb}
\usepackage{amsthm}
\usepackage{array}
\usepackage{mathtools}
\usepackage{amsmath}
\usepackage{graphicx}
\usepackage{tikz}
\usetikzlibrary{positioning}
\usepackage{xcolor}
\usepackage{thm-restate}

\newtheorem{theorem}{Theorem}
\newtheorem{lemma}[theorem]{Lemma}

\let\varepsilon=\varepsilon

\newcommand{\size}[1]{\ensuremath{|#1|}}
\newcommand{\ceil}[1]{\ensuremath{\lceil#1\rceil}}

\newcommand{\lrA}[1]{\ensuremath{\left(#1\right)}}

\def\B{\mathcal{B}}
\def\C{\mathcal{C}}
\def\M{\mathcal{M}}

\def\P{\mathcal{P}}
\def\T{\mathcal{T}}

\def\L{\mathcal{L}}
\def\R{\mathcal{R}}

\def\X{\mathcal{X}}
\def\Y{\mathcal{Y}}
\def\Z{\mathcal{Z}}

\newcommand{\PP}[1]{\ensuremath{\mbox{Pr}[#1]}}

\newcommand{\EE}[1]{\ensuremath{\mathbb{E}[#1]}}

\title{An Improved Approximation Algorithm for Metric Triangle Packing}

\author
{
Jingyang Zhao\footnote{University of Electronic Science and Technology of China. Email: \texttt{jingyangzhao1020@gmail.com}.}
\and
Mingyu Xiao\footnote{University of Electronic Science and Technology of China.
Email: \texttt{myxiao@gmail.com}.}
}

\date{}

\begin{document}

\maketitle

\begin{abstract}
Given an edge-weighted metric complete graph with $n$ vertices, the maximum weight metric triangle packing problem is to find a set of $n/3$ vertex-disjoint triangles with the total weight of all triangles in the packing maximized. Several simple methods can lead to a 2/3-approximation ratio. However, this barrier is not easy to break. Chen et al. proposed a randomized approximation algorithm with an expected ratio of $(0.66768-\varepsilon)$ for any constant $\varepsilon>0$. In this paper, we improve the approximation ratio to $(0.66835-\varepsilon)$. Furthermore, we can derandomize our algorithm.

\medskip
{\noindent\bf{Keywords}: \rm{Approximation Algorithms, Metric, Triangle Packing, Cycle Packing}}
\end{abstract}

\section{Introduction}
In a graph with $n$ vertices, a \emph{triangle packing} is a set of vertex-disjoint triangles (i.e., a simple cycle on three different vertices).
The triangle packing is called \emph{perfect} if its size is $n/3$ (i.e., it can cover all vertices).
Given an unweighted graph, the Maximum Triangle Packing (MTP) problem is to find a triangle packing of maximum cardinality.
In an edge-weighted graph, every edge has a non-negative weight.
There are two natural variants.
If the graph is an edge-weighted complete graph, a perfect triangle packing will exist. The Maximum Weight Triangle Packing (MWTP) problem is to find a perfect triangle packing such that the total weight of all triangles is maximized. Furthermore, if the graph is an edge-weighted metric complete graph (i.e., the weight of edges satisfies the symmetric and triangle inequality properties), the problem is called the Maximum Weight Metric Triangle Packing (MWMTP) problem.

In this paper, we mainly study approximation algorithms of MWMTP.

\subsection{Related work}
It is known~\cite{KirkpatrickH78} that even deciding whether an unweighted graph contains a perfect triangle packing is NP-hard. Hence, MTP, MWTP and MWMTP are all NP-hard. 
MTP also includes the well-known 3-dimensional matching problem as a special case~\cite{Garey79}.
Guruswam et al.~\cite{DBLP:conf/wg/GuruswamiRCCW98} showed that MTP remains NP-hard on chordal, planar, line, and total graphs. Moreover, MTP has been proved to be APX-hard even on graphs with maximum degree 4~\cite{DBLP:journals/ipl/Kann91,DBLP:journals/mst/RooijNB13}.
Chleb{\'{\i}}k and Chleb{\'{\i}}kov{\'{a}}~\cite{DBLP:conf/ciac/ChlebikC03} showed that MTP is NP-hard to approximate better than 0.9929. MTP is a special case of the unweighted 3-Set Packing problem, which admits an approximation ratio of $(2/3-\varepsilon)$~\cite{DBLP:journals/siamdm/HurkensS89,DBLP:conf/soda/Halldorsson95} and $(3/4-\varepsilon)$~\cite{DBLP:conf/focs/Cygan13,DBLP:conf/iscopt/FurerY14}.
For MTP on graphs with maximum degree 4, Manic and Wakabayashi~\cite{DBLP:journals/dm/ManicW08} proposed a 0.8333-approximation algorithm.

Similarly, MWTP can be seen as a special case of the weighted 3-Set Packing problem, which admits an approximation ratio of $(1/2-\varepsilon)$~\cite{arkin1998local,DBLP:journals/njc/Berman00}, $1/(2-1/63700992+\varepsilon)$~\cite{DBLP:conf/stacs/Neuwohner21}, and (1/1.786)~\cite{thiery2023improved}. For MWTP, there are some independent approximation algorithms. Hassin and Rubinstein~\cite{hassin2006approximation,DBLP:journals/dam/HassinR06a} proposed a randomized $(0.518-\varepsilon)$-approximation algorithm. Chen et al.~\cite{DBLP:journals/dam/ChenTW09,DBLP:journals/dam/ChenTW10} proposed an improved randomized $(0.523-\varepsilon)$-approximation algorithm. Using the method of pessimistic estimator, van Zuylen~\cite{DBLP:journals/dam/Zuylen13} proposed a deterministic algorithm with the same approximation ratio. The current best ratio is due to the $1/1.786$-approximation algorithm for the weighted 3-Set Packing problem~\cite{thiery2023improved}. For MWTP on $\{0,1\}$-weighted graphs (i.e., a complete graph with edge weights 0 and 1), Bar-Noy et al.~\cite{DBLP:journals/dam/Bar-NoyPRV18} proposed a $3/5$-approximation algorithm.

For MWMTP, Hassin et al.~\cite{hassin1997approximation1} gave the first deterministic $2/3$-approximation algorithm. Note that one can see that it is easy to design a $2/3$-approximation algorithm. Chen et al.~\cite{DBLP:journals/jco/ChenCLWZ21} proposed a nontrivial randomized approximation algorithm with an expected ratio of $(0.66768-\varepsilon)$. 

\subsection{Our results}
In this paper, we propose a deterministic $(0.66835-\varepsilon)$-approximation algorithm, which improves the deterministic $2/3$-approximation algorithm~\cite{hassin1997approximation1} and the randomized $(0.66768-\varepsilon)$-approximation algorithm~\cite{DBLP:journals/jco/ChenCLWZ21}. Our algorithm is based on the randomized algorithm in~\cite{DBLP:journals/jco/ChenCLWZ21}, but the framework of our analysis is completely different. The main differences are shown as follows.

Firstly, the previous algorithm considers so-called balanced/unbalanced triangles in an optimal solution of MWMTP, while in our algorithm we use novel definitions to make tighter analysis and do not need to separate them. Secondly, we also consider orientations of cycles in a cycle packing, which can simplify the structure significantly and enable us to design better algorithms. For example, the previous decomposition algorithm could only lead to a probability of $1/27$ (for one specific event) while our new decomposition algorithm enables us to obtain a probability of at least $97/1215$ (see details in Section~\ref{decomposition}). Lastly, our algorithm is deterministic, which is obtained by derandomizing our randomized algorithm: we first propose a new randomized algorithm such that if one specific matching is given our algorithm is deterministic, and then we further derandomize our algorithm by finding a desirable deterministic matching by the method of conditional exceptions (see details in Section~\ref{third}).

\section{Preliminaries}\label{prelim}
We use $G=(V, E)$ to denote a complete graph with $n$ vertices, where $n\bmod 3=0$. There is a non-negative weight function $w: E\to \mathbb{R}_{\geq0}$ on the edges in $E$.
The weight function $w$ is a semi-metric function, i.e., it satisfies the symmetric and triangle inequality properties.
For any weight function $w:X\to \mathbb{R}_{\geq0}$, we extend it to subsets of $X$, i.e., we define $w(Y) = \sum_{x\in Y} w(x)$ for $Y\subseteq X$.

A triangle $t=xyz$ is a simple cycle on three different vertices $\{x,y,z\}$. It contains exactly three edges $\{xy,xz,yz\}$. We may also use $\{a_t,b_t,c_t\}$ to denote them such that $w(a_t)\leq w(b_t)\leq w(c_t)$.

Two subgraphs or sets of edges are \emph{vertex-disjoint} if they do not share a common vertex. As mentioned, a \emph{perfect triangle packing} in graph $G$ is a set of vertex-disjoint $n/3$ triangles, and all vertices are covered.
It can be seen as the edges of $n/3$ vertex-disjoint triangles.
In the following, we will always consider a triangle packing as a perfect triangle packing. We will use $\B^*$ to denote the maximum weight triangle packing.

A \emph{matching} is a set of vertex-disjoint edges. In this paper, we often consider a matching of size $n/3$ and use $\M^*$ to denote the maximum weight matching, which can be found in $O(n^3)$ time~\cite{gabow1974implementation,lawler1976combinatorial}. A \emph{cycle packing} is a set of vertex-disjoint simple cycles, the length of each cycle is at least 3, and all vertices of the graph are covered. We use $\C^*$ to denote the maximum weight cycle packing, which can be found in $O(n^3)$ time~\cite{hartvigsen1984extensions}. We can get $w(\C^*)\geq w(\B^*)$ since a triangle packing is also a cycle packing. Given a set of edges $\X$ and a vertex $x$, we may simply use $x\in\X$ to denote that there is an edge in $\X$ that contains the vertex $x$. 

Let $\varepsilon$ be a constant such that $0<\varepsilon\leq 2/5$. A cycle $C$ is \emph{short} if $\size{C}\leq \frac{2}{\varepsilon}$; otherwise, the cycle is \emph{long}. For each long cycle $C\in\C^*$, it is easy to see that we can delete at most $\ceil{\frac{\varepsilon}{2}\size{C}}<\frac{\varepsilon}{2}\size{C}+1<\varepsilon\size{C}$ edges to get a set of paths such that the length of each path, $\size{P}$, satisfies that $3\leq\size{P}\leq\frac{2}{\varepsilon}$.
Moreover, the total weight of these paths is at least $(1-\varepsilon)w(C)$.
By connecting the endpoints of each path, we can get a set of short cycles. Hence, we can get a short cycle packing (i.e., a cycle packing containing only short cycles) in polynomial time, denoted by $\C$, such that $w(\C)\geq(1-\varepsilon)w(\C^*)$. Note that the constant $\varepsilon$ needs to be at most $2/5$. Otherwise, even the cycle of length 5 is a long cycle, and then the short cycle packing may contain a cycle of length less than 3.

We first define several kinds of triangles in $\B^*$.
Fix the short cycle packing $\C$. An edge is an \emph{internal edge} if both of its endpoints fall on the same cycle in $\C$; otherwise, it is an \emph{external edge}. Consider a triangle $t$ in the optimal triangle packing $\B^*$.
There are three cases:
\begin{itemize}
    \item $t$ is an \emph{internal} triangle if it contains three internal edges;
    \item $t$ is a \emph{partial-external} triangle if it contains one internal edge and two external edges;
    \item $t$ is an \emph{external} triangle if it contains three external edges.
\end{itemize}
See Figure~\ref{fig01} for an illustration.

\begin{figure}[ht]
\centering
\begin{tikzpicture}
\filldraw [black]
(-1,0) circle [radius=2pt]
(0,0) circle [radius=2pt]
(-1,1) circle [radius=2pt]
(0,1) circle [radius=2pt]
(-1,2) circle [radius=2pt]
(0,2) circle [radius=2pt]
(-1,3) circle [radius=2pt]
(0,3) circle [radius=2pt]

(3,0) circle [radius=2pt]
(3,1) circle [radius=2pt]
(3,2) circle [radius=2pt]
(4,0) circle [radius=2pt]
(4,2) circle [radius=2pt]
(5,0) circle [radius=2pt]
(5,2) circle [radius=2pt]
(6,0) circle [radius=2pt]
(6,1) circle [radius=2pt]
(6,2) circle [radius=2pt]

(2,3) circle [radius=2pt]
(2,4) circle [radius=2pt]
(2,5) circle [radius=2pt]
(3,3) circle [radius=2pt]
(3,5) circle [radius=2pt]
(4,3) circle [radius=2pt]
(4,5) circle [radius=2pt]
(5,3) circle [radius=2pt]
(5,4) circle [radius=2pt]
(5,5) circle [radius=2pt];

\draw[very thick,dotted] (0,0) to (0,3);
\draw[very thick,dotted] (0,0) to (-1,0);
\draw[very thick,dotted] (-1,0) to (-1,3);
\draw[very thick,dotted] (-1,3) to (0,3);

\draw[very thick,dotted] (2,3) to (2,5);
\draw[very thick,dotted] (2,5) to (5,5);
\draw[very thick,dotted] (5,5) to (5,3);
\draw[very thick,dotted] (5,3) to (2,3);

\draw[very thick,dotted] (3,0) to (3,2);
\draw[very thick,dotted] (3,2) to (6,2);
\draw[very thick,dotted] (6,2) to (6,0);
\draw[very thick,dotted] (6,0) to (3,0);

\draw[very thick] (0,1) to (2,3);
\draw[very thick] (3,1) to (2,3);
\draw[very thick] (0,1) to (3,1);
\node at (5/3,5/3) {\small $t_1$};

\draw[very thick] (3,0) to (6,2);
\draw[very thick] (6,2) to (4,3);
\draw[very thick] (4,3) to (3,0);
\node at (13/3,5/3) {\small $t_2$};

\draw[very thick] (3,3) to (5,4);
\draw[very thick] (5,4) to (3,5);
\draw[very thick] (3,5) to (3,3);
\node at (11/3,12/3) {\small $t_3$};
\end{tikzpicture}
\caption{An illustration of the three kinds of triangles, where there are three cycles (the dotted edges) in $\C$ and three triangles (the solid edges) in $\B^*$: $t_1$ is an external triangle, $t_2$ is a partial-external triangle, and $t_3$ is an internal triangle}
\label{fig01}
\end{figure}
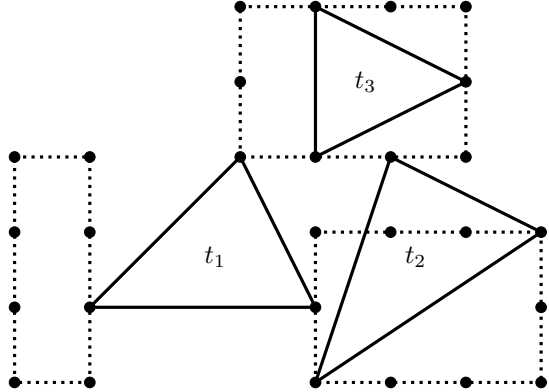

Given the short cycle packing $\C$, for the sake of analysis, we orient each cycle with an arbitrary direction (note that the cycles in~\cite{DBLP:journals/jco/ChenCLWZ21} are not oriented). Then, $\C$ becomes a set of directed cycles. For a vertex $x\in t\in \B^*$, it is an \emph{external vertex} if it is incident to two external edges of $t$. Hence, a partial-external triangle contains only one external vertex, and an external triangle contains three external vertices. For an external vertex $x\in t\in\B^*$, assume $x\in C\in\C$, the neighbor of $x$ on the directed cycle $C$ is denoted by $x'$, and then the directed edge $xx'$ is called an \emph{out-edge} of triangle $t$. The set of out-edges of triangle $t$ is denoted by $E_t$. Analogously, for each partial-external triangle $\size{E_t}=1$, for each external triangle $\size{E_t}=3$, and for each internal triangle $\size{E_t}=0$.
See Figure~\ref{fig02} for an illustration.

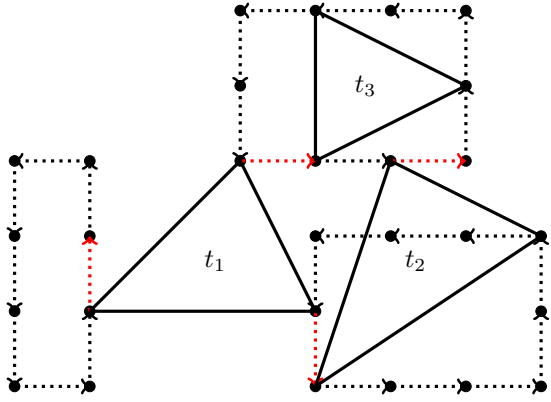
\begin{figure}[ht]
\centering
\begin{tikzpicture}
\filldraw [black]
(-1,0) circle [radius=2pt]
(0,0) circle [radius=2pt]
(-1,1) circle [radius=2pt]
(0,1) circle [radius=2pt]
(-1,2) circle [radius=2pt]
(0,2) circle [radius=2pt]
(-1,3) circle [radius=2pt]
(0,3) circle [radius=2pt]

(3,0) circle [radius=2pt]
(3,1) circle [radius=2pt]
(3,2) circle [radius=2pt]
(4,0) circle [radius=2pt]
(4,2) circle [radius=2pt]
(5,0) circle [radius=2pt]
(5,2) circle [radius=2pt]
(6,0) circle [radius=2pt]
(6,1) circle [radius=2pt]
(6,2) circle [radius=2pt]

(2,3) circle [radius=2pt]
(2,4) circle [radius=2pt]
(2,5) circle [radius=2pt]
(3,3) circle [radius=2pt]
(3,5) circle [radius=2pt]
(4,3) circle [radius=2pt]
(4,5) circle [radius=2pt]
(5,3) circle [radius=2pt]
(5,4) circle [radius=2pt]
(5,5) circle [radius=2pt];

\draw[very thick,dotted,->] (0,0) to (0,1);

\draw[very thick,dotted,->,red] (0,1) to (0,2);

\draw[very thick,dotted,->] (0,2) to (0,3);

\draw[very thick,dotted,<-] (0,0) to (-1,0);

\draw[very thick,dotted,<-] (-1,0) to (-1,1);

\draw[very thick,dotted,<-] (-1,1) to (-1,2);

\draw[very thick,dotted,<-] (-1,2) to (-1,3);

\draw[very thick,dotted,<-] (-1,3) to (0,3);

\draw[very thick,dotted,<-] (2,3) to (2,4);
\draw[very thick,dotted,<-] (2,4) to (2,5);

\draw[very thick,dotted,<-] (2,5) to (3,5);
\draw[very thick,dotted,<-] (3,5) to (4,5);
\draw[very thick,dotted,<-] (4,5) to (5,5);

\draw[very thick,dotted,<-] (5,5) to (5,4);
\draw[very thick,dotted,<-] (5,4) to (5,3);

\draw[very thick,dotted,<-,red] (5,3) to (4,3);
\draw[very thick,dotted,<-] (4,3) to (3,3);
\draw[very thick,dotted,<-,red] (3,3) to (2,3);

\draw[very thick,dotted,<-,red] (3,0) to (3,1);
\draw[very thick,dotted,<-] (3,1) to (3,2);

\draw[very thick,dotted,<-] (3,2) to (4,2);
\draw[very thick,dotted,<-] (4,2) to (5,2);
\draw[very thick,dotted,<-] (5,2) to (6,2);

\draw[very thick,dotted,<-] (6,2) to (6,1);
\draw[very thick,dotted,<-] (6,1) to (6,0);

\draw[very thick,dotted,<-] (6,0) to (5,0);
\draw[very thick,dotted,<-] (5,0) to (4,0);
\draw[very thick,dotted,<-] (4,0) to (3,0);

\draw[very thick] (0,1) to (2,3);
\draw[very thick] (3,1) to (2,3);
\draw[very thick] (0,1) to (3,1);
\node at (5/3,5/3) {\small $t_1$};

\draw[very thick] (3,0) to (6,2);
\draw[very thick] (6,2) to (4,3);
\draw[very thick] (4,3) to (3,0);
\node at (13/3,5/3) {\small $t_2$};

\draw[very thick] (3,3) to (5,4);
\draw[very thick] (5,4) to (3,5);
\draw[very thick] (3,5) to (3,3);
\node at (11/3,12/3) {\small $t_3$};
\end{tikzpicture}
\caption{An illustration of the oriented cycles (the dotted directed edges) and out-edges (the red dotted directed edges), where there are three out-edges from $t_1$ (an external triangle), one out-edge from $t_2$ (a partial-external triangle), and no out-edges from $t_3$ (an internal triangle)
}
\label{fig02}
\end{figure}

Next, we further define type-1 and type-2 triangles. Fix a constant $\tau$ with $0\leq \tau\leq 1/3$. For each triangle $t\in\B^*$, it is a \emph{type-1} triangle if it holds that $w(e)> \frac{1}{2}(1-\tau)w(t)$ for every edge $e\in E_t$; otherwise, it is a \emph{type-2} triangle (note that our definition is different from the definition in \cite{DBLP:journals/jco/ChenCLWZ21}).
For this definition, we only consider partial-external triangles and external triangles since internal triangles have no out-edges. Intuitively, the out-edges of type-1 triangles are all heavy edges.

In the following, we split $\B^*$ into five disjoint sets $\B^*_1$, $\B^*_2$, $\B^*_3$, $\B^*_4$, and $\B^*_5$ such that
\begin{enumerate}
    \item[$\B^*_1$:] the set of all internal triangles in $\B^*$;
    \item[$\B^*_2$:] the set of all type-1 partial-external triangles in $\B^*$;
    \item[$\B^*_3$:] the set of all type-2 partial-external triangles in $\B^*$;
    \item[$\B^*_4$:] the set of all type-1 external triangles in $\B^*$;
    \item[$\B^*_5$:] the set of all type-2 external triangles in $\B^*$.
\end{enumerate}
Note that $\B^*_1$ is the set of all internal triangles in $\B^*$, $\B^*_2\cup \B^*_3$ is the set of all partial-external triangles in $\B^*$, and $\B^*_4\cup \B^*_5$ is the set of all external triangles in $\B^*$. Hence, we have $\B^*=\B^*_1\cup\B^*_2\cup\B^*_3\cup\B^*_4\cup \B^*_5$. 

We also propose two new important definitions, which will be used frequently:
\[
u_i=\sum_{t\in \B^*_i}w(a_t)/\sum_{t\in \B^*_i}w(c_t)\quad\mbox{and}\quad v_i=\sum_{t\in \B^*_i}w(b_t)/\sum_{t\in \B^*_i}w(c_t).
\]
So, $\sum_{t\in\B^*_i}w(a_t)=\frac{u_i}{u_i+v_i+1}w(\B^*_i)$, $\sum_{t\in\B^*_i}w(b_t)=\frac{v_i}{u_i+v_i+1}w(\B^*_i)$, and $\sum_{t\in\B^*_i}w(c_t)=\frac{1}{u_i+v_i+1}w(\B^*_i)$.

\subsection{Paper Organization}
In the remaining parts of the paper, we will describe the approximation algorithm. It will compute three triangle packings: $\T_1$, $T_2$, and $\T_3$. The computation of $\T_1$, which can be seen in Section~\ref{first}, is based on the short cycle packing $\C$ via a dynamic method. In Section~\ref{second}, we will use the maximum weight matching $\M^*$ to compute $\T_2$. In Section~\ref{third}, we will first compute a randomized matching on the short cycle packing $\C$, and then use it to compute $\T_3$. The algorithm of $\T_3$ is randomized but we show that it can be derandomized efficiently by the method of conditional expectations. The approximation algorithm will return the best one. Hence, the approximation ratio is
\[
\frac{\max\{w(\T_1),\ w(T_2),\ w(\T_3)\}}{w(\B^*)}.
\]
The trade-off among these three triangle packings will be shown in Section~\ref{analysis}.

\section{The First Triangle Packing}\label{first}
\subsection{The algorithm}
A \emph{partial triangle packing}, denoted by $\P$, is a set of triangle-components and edge-components such that the total number of components is at most $n/3$ and the edges of each component are all internal edges.
The augmented weight of $\P$ is defined as $\widetilde{w}(\P)=\sum_{t\in\P}w(t)+\sum_{e\in \P}2w(e)$.
Suppose there are $p_1$ triangle-components and $p_2$ edge-components in $\P$, we have $n-3p_1-2p_2\geq p_2$ unused vertices (i.e., vertices not in $\P$).
Hence there are at least as many unused vertices as edge-components.
Given a partial triangle packing $\P$, one can construct a triangle packing $\T$ as follows:

\medskip
\noindent\textbf{Step~1.} Pick all triangle-components into the packing;

\noindent\textbf{Step~2.} Arbitrarily assign an unused vertex for each edge-component and complete them into a triangle;

\noindent\textbf{Step~3.} Arbitrarily construct a set of triangles if there are still unused vertices.
\medskip

Note that the weight of triangles in {Step~1} is $\sum_{t\in\P}w(t)$ and the weight of triangles in {Step~2} is at least $\sum_{e\in\P}2w(e)$ by the triangle inequality. Hence, we have that $w(\T)\geq \widetilde{w}(\P)$.

Since the length of each cycle in $\C$ is bounded by a constant, the maximum augmented weight partial triangle packing, denoted by $\P^*$, can be found in polynomial time via a dynamic  programming method~\cite{DBLP:journals/jco/ChenCLWZ21}. Hence, we can get the following lemma.

\begin{lemma}[\cite{DBLP:journals/jco/ChenCLWZ21}]\label{dp}
There is a polynomial-time algorithm that can compute a triangle packing $\T_1$ such that $w(\T_1)\geq \widetilde{w}(\P^*)$.
\end{lemma}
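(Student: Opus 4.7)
The plan is to establish the lemma in two stages: first compute $\P^*$ in polynomial time via dynamic programming, then apply the three-step construction described above to $\P^*$, using the triangle inequality to verify that $w(\T_1) \ge \widetilde{w}(\P^*)$.

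For computing $\P^*$, the key structural observation is that every component of a partial triangle packing consists solely of internal edges and therefore lies within a single cycle of $\C$, so the optimization decomposes across cycles. Since each cycle $C \in \C$ has constant length $|C| \le 2/\varepsilon$, I can enumerate in constant time every vertex-disjoint arrangement of triangle-components and edge-components supported on $C$ (there are at most $\binom{|C|}{3} + \binom{|C|}{2}$ candidate components on $C$, and hence only a constant number of subfamilies to inspect). For each cycle $C$ and each integer $k$, precompute $f_C(k)$, the maximum augmented weight of a partial packing on $C$ that uses exactly $k$ components. Then combine cycles by a one-dimensional knapsack-style DP: let $g_j(m)$ denote the maximum total augmented weight achievable over the first $j$ cycles using at most $m$ components in total, with recurrence $g_j(m) = \max_k\{ g_{j-1}(m-k) + f_{C_j}(k)\}$. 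Reading off an optimizer of $g_{|\C|}(n/3)$ yields $\P^*$ in polynomial time.

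Given $\P^*$, I would apply Steps~1--3 and track weight contributions. Step~1 contributes exactly $\sum_{t \in \P^*} w(t)$. In Step~2, each edge-component $e = xy$ is completed with an unused vertex $z$ into the triangle $xyz$; the triangle inequality gives $w(xz) + w(yz) \ge w(xy)$, so this triangle has weight at least $2 w(xy)$, and summing over edge-components contributes at least $\sum_{e \in \P^*} 2 w(e)$. Step~3 contributes non-negative weight. Adding these bounds gives $w(\T_1) \ge \sum_{t\in\P^*} w(t) + \sum_{e\in\P^*} 2w(e) = \widetilde{w}(\P^*)$, as required.

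The main (and rather mild) obstacle is keeping the cycle-level DP precise: one must be careful that the ``exactly $k$ components'' bookkeeping is consistent with the global cap $n/3$ on the number of components, and that Step~2 in the construction never runs out of unused vertices. The latter follows from the accounting in the lemma's preamble (at least as many unused vertices as edge-components), and the former is handled by the outer knapsack DP. Beyond these routine checks, the entire weight bound collapses to the single-line triangle-inequality argument previewed in the paragraph before the lemma.
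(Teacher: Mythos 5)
Your proposal is correct and takes essentially the same approach as the paper: compute $\P^*$ by enumerating component arrangements within each constant-length cycle and combining via a knapsack-style DP over the global component budget, then apply the three-step construction and lower-bound the weight using the triangle inequality. The paper itself simply defers the dynamic program to the cited reference of Chen et al.\ and establishes the weight bound $w(\T)\geq\widetilde{w}(\P)$ in the paragraph preceding the lemma, so you have merely spelled out details the paper left implicit.
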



\subsection{The analysis}
Let $p_1$ and $p_2$ be the number of internal triangles and partial-external triangles, respectively.
After deleting all external edges of all triangles in $\B^*$, we can get a set of components, denoted by $\P$, where there are exactly $p_1$ triangle-components and $p_2$ edge-components. We can get $p_1+p_2\leq n/3$ since a triangle-component corresponds to an internal triangle and an edge-component corresponds to a partial-external triangle. Hence, $\P$ is a partial triangle packing. We can get that 
\[
w(\T_1)\geq \widetilde{w}(\P^*)\geq\widetilde{w}(\P)=\sum_{t\in\P}w(t)+\sum_{e\in \P}2w(e).
\]
It is easy to see that $\sum_{t\in\P}w(t)=w(\B^*_1)$ since the triangle-components in $\P$ contains all internal triangles and $\B_1$ is the set of all internal triangles. However, for edge-components, it contains only one (internal) edge of each partial-external triangle. In the worst case, the contained edge in each partial-external triangle $t$ is the least weighted edge $a_t$. So, we have $\sum_{e\in \P}2w(e)\geq\sum_{t\in\B^*_2\cup\B^*_3}2w(a_t)$. 
Then, we can get that 
\[
w(\T_1)\geq w(\B^*_1)+\sum_{t\in\B^*_2\cup\B^*_3}2w(a_t).
\]
Recall that $\sum_{t\in\B^*_i}w(a_t)=\frac{u_i}{u_i+v_i+1}w(\B^*_i)$. By Lemma~\ref{dp}, we can get the following lemma.

\begin{lemma}\label{t1}
There is a polynomial-time algorithm that can compute a triangle packing $\T_1$ such that
\[
w(\T_1)\geq w(\B^*_1)+\sum_{i=2}^{3}\frac{2u_{i}}{u_{i}+v_{i}+1}w(\B^*_{i}).
\]
\end{lemma}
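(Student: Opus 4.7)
The plan is to invoke Lemma~\ref{dp} after constructing a specific partial triangle packing $\P$ directly from the optimal packing $\B^*$, and then comparing $\widetilde{w}(\P^*)\geq\widetilde{w}(\P)$ against the weight expression in the statement.

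First, I would build $\P$ from $\B^*$ by discarding every external edge of every triangle in $\B^*$. By the case analysis already set up in the preliminaries, each internal triangle $t\in\B^*_1$ survives intact and becomes a triangle-component, each partial-external triangle $t\in\B^*_2\cup\B^*_3$ retains exactly its unique internal edge and becomes an edge-component, and each external triangle $t\in\B^*_4\cup\B^*_5$ contributes no components. Thus $\P$ has $p_1=\size{\B^*_1}$ triangle-components and $p_2=\size{\B^*_2\cup\B^*_3}$ edge-components, with $p_1+p_2\leq\size{\B^*}=n/3$, and all edges within each component are internal by construction; this makes $\P$ a legitimate partial triangle packing in the sense of Section~\ref{first}.

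Next, I would expand $\widetilde{w}(\P)=\sum_{t\in\P}w(t)+\sum_{e\in\P}2w(e)$ term by term. The first sum equals $w(\B^*_1)$ exactly, since the triangle-components are precisely the internal triangles. For the second sum, each edge-component contributes $2w(e)$ where $e$ is the single internal edge retained from some partial-external triangle $t\in\B^*_2\cup\B^*_3$; since $w(e)\geq w(a_t)$ by definition of $a_t$ as the lightest edge of $t$, the edge-component contribution is bounded below by $\sum_{t\in\B^*_2\cup\B^*_3}2w(a_t)$. Combining with Lemma~\ref{dp} gives $w(\T_1)\geq w(\B^*_1)+2\sum_{t\in\B^*_2\cup\B^*_3}w(a_t)$, and finally rewriting each sum via the identity $\sum_{t\in\B^*_i}w(a_t)=\frac{u_i}{u_i+v_i+1}w(\B^*_i)$ for $i=2,3$ yields the claimed inequality.

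There is essentially no main obstacle: the lemma is a clean bookkeeping consequence of the dynamic-programming guarantee, and the only place any slack appears is the worst-case step $w(e)\geq w(a_t)$ for the internal edge of each partial-external triangle, which is precisely why only $u_i$ and not a larger combination involving $v_i$ or $1$ shows up in the final bound. All steps run in polynomial time since $\T_1$ itself is produced by the polynomial-time algorithm of Lemma~\ref{dp}, and no further computation is needed for the analysis.
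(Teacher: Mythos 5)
Your proof is correct and follows the same approach as the paper: construct $\P$ from $\B^*$ by deleting all external edges, bound the edge-component contribution via $w(e)\geq w(a_t)$, and then invoke Lemma~\ref{dp} together with the identity $\sum_{t\in\B^*_i}w(a_t)=\frac{u_i}{u_i+v_i+1}w(\B^*_i)$.
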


\section{The Second Triangle Packing}\label{second}
\subsection{The algorithm}
The second triangle packing $\T_2$ is generated using the maximum weight matching $\M^*$. The algorithm is simple and contains two following steps.

\medskip
\noindent\textbf{Step~1.} Find the maximum weight matching $\M^*$ (of size $n/3$) in $O(n^3)$ time~\cite{gabow1974implementation,lawler1976combinatorial};

\noindent\textbf{Step~2.} Arbitrarily assign an unused vertex for each edge of $\M^*$ and complete them into a triangle.
\medskip

The matching $\M^*$ uses $2n/3$ vertices, and hence after {Step~1}, there are still $n/3$ unused vertices. Then, we can get a triangle packing (i.e., $\T_2$) after {Step~2}. Moreover, it is easy to see that the running time is $O(n^3)$, dominated by computing $\M^*$.

Note that we can also construct a bipartite graph $B=(\X, V\setminus V(\X))$ such that there are $n/3$ super-vertices on the left corresponding to the $n/3$ edges in $\X$ and $n/3$ super-vertices on the right corresponding to the unused $n/3$ vertices in $V\setminus V(\X)$, where $V(\X)$ denotes the set of vertices used by $\X$. For each left super-vertex $xy\in \X$ and each right super-vertex $z\in V\setminus V(\X)$, the weight of the edge between them is defined as $w(xz)+w(yz)$. We could obtain a better triangle packing that contains the matching $\M^*$ by finding a maximum weight matching in graph $B$, as did in~\cite{DBLP:journals/jco/ChenCLWZ21}. But, it cannot improve the approximation ratio in our analysis. So, we only use the simpler algorithm.

\subsection{The analysis}
Consider a triangle $t=xyz\in \T_2$, and assume w.l.o.g. that $xy\in\M^*$. By the triangle inequality, we can get that $w(t)=w(xy)+w(xz)+w(yz)\geq 2w(xy)$. Hence, it is easy to get $w(\T_2)\geq 2w(\M^*)$.

Take the most weighted edge $c_t$ for each triangle $t\in \B^*$. Then, the set of edges forms a matching of size $n/3$. Since $\M^*$ is the maximum weight matching of size $n/3$, we have $w(\M^*)\geq\sum_{t\in\B^*}w(c_t)$. Recall that $\sum_{t\in\B^*_i}w(c_t)=\frac{1}{u_i+v_i+1}w(\B^*_i)$ and $w(\T_2)\geq 2w(\M^*)$. We can get the following lemma.
\begin{lemma}\label{t2}
There is a polynomial-time algorithm that can compute a triangle packing $\T_2$ such that
\[
w(\T_2)\geq\sum_{t\in\B^*}2w(c_t)=\sum_{i=1}^{5}\frac{2}{u_i+v_i+1}w(\B^*_i).
\]
\end{lemma}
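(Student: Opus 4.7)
The plan is to bound $w(\T_2)$ from below in two stages: first relate it to the weight of the matching $\M^*$ that the algorithm starts from, and then relate $w(\M^*)$ to the heaviest edges of the optimal triangle packing $\B^*$.

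For the first stage, I would consider any triangle $t = xyz \in \T_2$ and note that by construction it was built by extending some matching edge, say $xy \in \M^*$, with an unused vertex $z$. The metric (triangle) inequality gives $w(xz) + w(yz) \geq w(xy)$, hence $w(t) = w(xy) + w(xz) + w(yz) \geq 2w(xy)$. Summing over all $n/3$ triangles of $\T_2$ — each corresponding to a distinct edge of $\M^*$ — yields $w(\T_2) \geq 2w(\M^*)$.

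For the second stage, I would exhibit an explicit matching of size $n/3$ whose weight lower-bounds $w(\M^*)$. The natural choice is $\{c_t : t \in \B^*\}$: since $\B^*$ is a perfect triangle packing, its triangles are vertex-disjoint, so picking one edge from each triangle produces a matching of size exactly $n/3$. Maximality of $\M^*$ among matchings of that size then gives $w(\M^*) \geq \sum_{t \in \B^*} w(c_t)$.

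Combining the two stages, $w(\T_2) \geq 2 \sum_{t \in \B^*} w(c_t)$. To get the final form, I would split $\B^*$ into the five classes $\B^*_1,\dots,\B^*_5$ and substitute the identity $\sum_{t \in \B^*_i} w(c_t) = \frac{1}{u_i + v_i + 1} w(\B^*_i)$ from the preliminaries, yielding
\[
w(\T_2) \geq \sum_{i=1}^{5} \frac{2}{u_i + v_i + 1} w(\B^*_i).
\]
There is no real obstacle here; the only place where care is needed is making sure the heaviest-edge selection really gives a matching (which follows immediately from vertex-disjointness of $\B^*$) and that the algorithm's running time is $O(n^3)$, dominated by the maximum weight matching computation of~\cite{gabow1974implementation,lawler1976combinatorial}.
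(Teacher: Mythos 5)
Your proof is correct and mirrors the paper's argument step for step: first $w(\T_2)\geq 2w(\M^*)$ via the triangle inequality applied to each triangle built on a matching edge, then $w(\M^*)\geq\sum_{t\in\B^*}w(c_t)$ by observing that the heaviest edges of the vertex-disjoint triangles in $\B^*$ form a matching of size $n/3$, and finally substituting the identity for $\sum_{t\in\B^*_i}w(c_t)$. No differences worth noting.
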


It is worth noting that that the algorithm of $\T_2$ is a simple $2/3$-approximation algorithm. Since  $u_i\leq v_i\leq 1$, we can get $w(\T_2)\geq \frac{2}{3}w(\B^*)$ by Lemma~\ref{t2}.
Next, we will construct a randomized matching $\Z$ of size $n/3$. A good advantage of $\Z$ is that if type-1 triangles in $\B^*$ has a nonzero weight, the expected weight of $\Z$ will have a strictly larger weight than $\frac{1}{3}w(\B^*)$. In this case, $\T_2$ will have a strictly larger weight than $\frac{2}{3}w(\B^*)$ since $\M^*$ is the maximum weight matching and we have $w(\T_2)\geq 2w(\M^*)\geq 2\cdot\EE{w(\Z)}$.

\subsection{The randomized matching algorithm}~\label{decomposition}
Our randomized matching algorithm is a refined version of the algorithm presented in~\cite{DBLP:journals/jco/ChenCLWZ21}, which mainly contains four following steps.

\medskip
\noindent\textbf{Step~1.} For each triangle $t\in\B^*$, select an edge $e_t$ uniformly at random. The set of selected edges is denoted by $\X$.

\noindent\textbf{Step~2.} For each type-1 triangle $t\in\B^*$, if there is an out-edge that does not share a common vertex with any edge of $\X$, select it. The set of selected edges is denoted by $\Y$.

\noindent\textbf{Step~3.} Initialize $\Z=\emptyset$. Consider the out-edges in $\Y$, which contains a set of edge-components, path-components, and cycle-components. We consider three following cases.
\begin{itemize}
    \item For each edge-component, select the edge into $\Z$.
    \item For each path-component or even cycle-component (i.e., the number of vertices in it is even), partition it into two matchings, select one matching uniformly at random, and then select the edges in the chosen matching into $\Z$.
    \item For each odd cycle-component (i.e., the number of vertices in it is odd), select one edge uniformly at random, delete it, partition it into two matchings, select one matching uniformly at random, and then select the edges in the chosen matching into $\Z$.
\end{itemize} 
Note that $\Z$ is a matching containing only out-edges. Moreover, the out-edges are vertex-disjoint with the edges in $\X$.

\noindent\textbf{Step~4.} Consider each triangle $t\in\B^*$. If $\Z$ contains no out-edges of $t$, select the edge $e_t$ in $\X$ into $\Z$.
\medskip

Roughly speaking, the main idea of the randomized matching algorithm in~\cite{DBLP:journals/jco/ChenCLWZ21} is to decompose the out-edges in $\Y$ into three matchings and put one into $\Z$, while we can put more out-edges into $\Z$ in {Step~3} in our algorithm. Although our algorithm becomes more complicated, we can quickly analyze the expected weight of the randomized matching $\Z$ because the cycles in $\C$ are all oriented.

\begin{lemma}
The edge set $\Z$ is a matching of size $n/3$.
\end{lemma}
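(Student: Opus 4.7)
The plan is to split the proof into two parts: verifying that $\Z$ is a matching, and counting to conclude $|\Z| = n/3$. My main bookkeeping tool is to track, for each triangle $t \in \B^*$, exactly how many of its edges end up in $\Z$; I will argue that every triangle contributes precisely one edge.

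The crux of the matching property is Step~3. I would first observe that $\Y$ induces a subgraph of maximum degree $2$ on $V$: every out-edge lies on some oriented cycle of $\C$, and a vertex $v$ can be the tail of at most one out-edge (it must be an external vertex of its triangle, and then the out-edge is uniquely $v \to v'$ where $v'$ is $v$'s successor on its cycle) and the head of at most one (the predecessor of $v$ on its cycle must itself be external in its own triangle, giving a unique candidate). Consequently the connected components of $\Y$ are isolated edges, paths, and cycles, and the three cases in Step~3 are designed to extract a matching from each: the edge itself for an edge-component, one of the two alternating color classes for a path or even cycle, and the same construction applied to the path that remains after removing an edge of an odd cycle. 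Hence the set $\Z_3$ of edges produced by Step~3 is a matching. The Step~4 additions then preserve this: every out-edge in $\Y$ is by construction vertex-disjoint from every edge of $\X$, so every edge of $\Z_3$ is disjoint from each $e_t$ inserted in Step~4, and distinct $e_t$'s are pairwise vertex-disjoint since the triangles of $\B^*$ are.

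For the size, I would verify case by case that each triangle contributes exactly one edge to $\Z$. Internal triangles (which have no out-edges at all) and type-2 triangles (which Step~2 ignores) contribute no out-edge to $\Z_3$, so Step~4 inserts $e_t$ for each of them. For a type-1 triangle $t$, Step~2 puts at most one of $t$'s out-edges into $\Y$, so $\Z_3$ contains at most one out-edge of $t$; if it contains one then Step~4 skips $t$, and otherwise Step~4 adds $e_t$. Either way exactly one edge per triangle lands in $\Z$, giving $|\Z| = n/3$. The main subtlety in all of this is the maximum-degree-$2$ structural claim, which rests on the careful interaction between the orientation of $\C$ and the fact that out-edges originate at external vertices.
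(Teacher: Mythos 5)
Your proof is correct and follows essentially the same approach as the paper: observing that $\Y$ is a subgraph of the degree-2 cycle packing $\C$ (so Step 3 validly extracts a matching), that $\Y$ and $\X$ are vertex-disjoint by construction, and then counting one edge per triangle of $\B^*$ for the size $n/3$. You spell out the degree bound via heads and tails of out-edges and split the triangle count into internal/type-2/type-1 subcases, which is a bit more granular than the paper's phrasing but amounts to the same argument.
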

\begin{proof}
First, it is easy to see that the edge set $\X$ in {Step~1} is a matching of size $n/3$. Second, the edge set $\Y$ in {Step~2} is a set of out-edges, which can be seen as a subgraph of the cycle packing $\C$. Hence, it contains a set of edge-components, path-components, and cycle-components, and {Step~3} is feasible. Then, since the out-edges in $\Y$ do not share a common vertex with any edge in $\X$, and $\Z$ after {Step~3} is a subset of vertex-disjoint edges in $\Y$, hence $\Z$ is also a matching. Moreover, it is easy to see that for each triangle $t\in\B^*$ the edge set $\Y$ contains at most one out-edge of $t$. Hence, according to {Step~4}, for each triangle $t\in\B^*$, $\Z$ either contains exactly one out-edge of $t$ or it will contain the edge $e_t$. Hence, there are exactly $n/3$ edges in $\Z$.
\end{proof}

Next, we analyze the expected weight of the randomized matching $\Z$. 

\begin{lemma}\label{e}
It holds that
\begin{align*}
\EE{w(\Z)}>&\ \frac{1}{3}w(\B^*)+\sum_{t\in\B^*_{2},e\in E_t}\lrA{\frac{1-3\tau}{6}w(t)\cdot\PP{e\in\Z}} +\sum_{t\in\B^*_{4}}\lrA{\frac{1-3\tau}{6}w(t)\cdot\sum_{e\in E_t}\PP{e\in\Z}}.
\end{align*}
\end{lemma}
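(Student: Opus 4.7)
My plan is to decompose $\EE{w(\Z)}=\sum_{t\in\B^*}\EE{w(X_t)}$, where $X_t$ is the unique edge of $\Z$ attributed to triangle $t$: either $e_t$ (added by Step~4 when no out-edge of $t$ lies in $\Z$), or the out-edge of $t$ that does. For internal triangles $t\in\B^*_1$, $E_t=\emptyset$, so $X_t=e_t$ always and $\EE{w(X_t)}=\EE{w(e_t)}=w(t)/3$; the same holds for type-2 triangles $t\in\B^*_3\cup\B^*_5$ because Step~2 skips them, so their out-edges never enter $\Z$. Together these contribute $\tfrac{1}{3}w(\B^*_1\cup\B^*_3\cup\B^*_5)$ to $\EE{w(\Z)}$.

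The main calculation is for type-1 triangles $t\in\B^*_2\cup\B^*_4$. The key observation I would use is: if an out-edge $e=xx'$ of $t$ (with $x$ the external vertex on $e$) lies in $\Z$, then $e\in\Y$, which forces $x\notin V(\X)$; since $\X$ contains one edge per triangle and only $t$ covers $x$, the chosen edge $e_t$ must avoid $x$, hence $e_t$ equals the unique edge of $t$ not containing $x$, which I denote $\bar f_e$. Two consequences follow: (i) for an external triangle the events $\{e\in\Z\}_{e\in E_t}$ are pairwise disjoint, because each requires $e_t$ to be a different edge of $t$, and (ii) $\EE{w(e_t)\mathbf{1}[e\in\Z]}=w(\bar f_e)\PP{e\in\Z}$ exactly. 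Combining with $\EE{w(e_t)}=w(t)/3$ yields
\[
\EE{w(X_t)}=\sum_{e\in E_t}w(e)\PP{e\in\Z}+\frac{w(t)}{3}-\sum_{e\in E_t}w(\bar f_e)\PP{e\in\Z}=\frac{w(t)}{3}+\sum_{e\in E_t}\bigl(w(e)-w(\bar f_e)\bigr)\PP{e\in\Z}.
\]

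Summing over all triangles, $\EE{w(\Z)}=\tfrac{1}{3}w(\B^*)+\sum_{t\in\B^*_2\cup\B^*_4,\,e\in E_t}(w(e)-w(\bar f_e))\PP{e\in\Z}$, and the lemma reduces to showing this gain strictly exceeds $\sum_{t,e}\tfrac{1-3\tau}{6}w(t)\PP{e\in\Z}$. Applying the strict type-1 bound $w(e)>\tfrac{1-\tau}{2}w(t)$ and using the identity $\tfrac{1-\tau}{2}-\tfrac{1-3\tau}{6}=\tfrac{1}{3}$, the remaining claim (per triangle) is $\sum_{e\in E_t}w(\bar f_e)\PP{e\in\Z}\le\tfrac{w(t)}{3}\sum_{e\in E_t}\PP{e\in\Z}$, with the required strictness supplied by the strict type-1 inequality. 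For an external triangle in $\B^*_4$ the three opposite edges partition the edges of $t$, so $\sum_{e\in E_t}w(\bar f_e)=w(t)$, and the claim reduces to a clean averaging statement across the three out-edges. The hardest part will be the partial-external case $\B^*_2$: there $|E_t|=1$ and no averaging is available, so the single-term inequality must be extracted by a direct metric argument relating $w(\bar f_e)$, $w(e)$, $w(t)$, and the strict type-1 condition. Handling this case carefully is the main obstacle.
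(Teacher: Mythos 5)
Your computation is correct and more careful than the paper's: $e_t$ and the event $\{e\in\Z\}$ are indeed dependent, because $e\in\Z$ forces $e\in\Y$, which forces $e_t=\bar f_e$; hence $\EE{w(e_t)\mathbf{1}[e\notin\Z]}=\tfrac{1}{3}w(t)-w(\bar f_e)\PP{e\in\Z}$, giving your exact identity $\EE{w(X_t)}=\tfrac{1}{3}w(t)+\sum_{e\in E_t}(w(e)-w(\bar f_e))\PP{e\in\Z}$. The paper's proof, by contrast, simply writes the expected contribution of $t$ as $\sum_{e\in E_t} w(e)\PP{e\in\Z}+\EE{w(e_t)}(1-\sum_{e\in E_t}\PP{e\in\Z})$, which is exact only under the very independence you have shown fails. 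So you have not misread the argument; you have surfaced a step the paper elides.

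The obstacle you flag at the end, however, is not surmountable from the stated hypotheses. For $t\in\B^*_2$ your reduction needs $w(\bar f_e)\le\tfrac{1}{3}w(t)$; but $\bar f_e$ is the internal edge $yz$ of the partial-external triangle $t=xyz$, the type-1 condition constrains only the out-edge $e$ (which is not an edge of $t$), and the triangle inequality allows $w(yz)$ up to $\tfrac{1}{2}w(t)$. A concrete failure at $\tau=0.25$: $w(xy)=w(xz)=1$, $w(yz)=2$, so $w(t)=4$, and an out-edge with $w(e)=1.6>1.5=\tfrac{1-\tau}{2}w(t)$ gives $w(e)-w(\bar f_e)=-0.4$, well below $\tfrac{1-3\tau}{6}w(t)=\tfrac{1}{6}$. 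For $t\in\B^*_4$ the ``clean averaging'' you invoke would need the three $\PP{e\in\Z}$ to be (nearly) equal; Lemma~\ref{p} only lower-bounds them by $\tfrac{97}{1215}$ while they can reach $\tfrac{1}{9}$, and a heavy $\bar f_e$ paired with the largest probability again breaks the weighted average. So the per-triangle inequality your proposal reduces to is false in general, and the lemma's stated proof strategy does not close the gap either; a fix would require bounding $w(\bar f_e)$ from additional structure, or carrying the exact $w(e)-w(\bar f_e)$ terms downstream into Lemma~\ref{t2+} rather than the uniform $\tfrac{1-3\tau}{6}w(t)$ surrogate.
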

\begin{proof}
Recall that for each triangle $t\in\B^*$, $\Z$ either contains exactly one edge of $t$ or exactly one out-edge of $t$. Hence, we will analyze the expected contributed weight of each triangle $t\in\B^*$. We will consider the following three cases.

\textbf{Case~1: Non type-1 triangles.}
For each triangle $t\in\B^*\setminus(\B^*_{2}\cup\B^*_{4})$ (i.e., the set of non type-1 triangles), it does not have out-edges, and hence it is clear that $e_t\in\Z$. Recall that $e_t$ was selected uniformly at random. Hence, the expected contributed weight is $\EE{w(e_t)}=\frac{1}{3}w(t)$. The total expected contributed weight of all non type-1 triangles is 
\[
\frac{1}{3}w(\B^*\setminus(\B^*_{2}\cup\B^*_{4}))=\frac{1}{3}w(\B^*)-\frac{1}{3}w(\B^*_2)-\frac{1}{3}w(\B^*_4).
\]

\textbf{Case~2: Type-1 partial-external triangles.}
For each type-1 partial-external triangle $t\in\B^*_{2}$, there is only one out-edge in $E_t$, denoted by $e$, and hence the expected contributed weight is $w(e)\cdot\PP{e\in\Z}+\EE{w(e_t)}\cdot(1-\PP{e\in\Z})$. Recall that for each type-1 triangle $t'$ it holds that $w(e)>\frac{1}{2}(1-\tau)w(t')$ for every edge $e\in E_{t'}$. Moreover, since $\EE{w(e_t)}=\frac{1}{3}w(t)$, we know that the expected contributed weight satisfies 
\begin{align*}
w(e)\cdot\PP{e\in\Z}+\EE{w(e_t)}\cdot(1-\PP{e\in\Z})
&>\frac{1}{2}(1-\tau)w(t)\cdot\PP{e\in\Z}+\frac{1}{3}w(t)\cdot(1-\PP{e\in\Z})\\
&=\frac{1}{3}w(t)+\frac{1-3\tau}{6}w(t)\cdot\PP{e\in\Z}.
\end{align*}
The total expected contributed weight of all type-1 partial-external triangles is at least 
\[
\frac{1}{3}w(\B^*_{2})+\sum_{t\in\B^*_{2},e\in E_t}\lrA{\frac{1-3\tau}{6}w(t)\cdot\PP{e\in\Z}}.
\]

\textbf{Case~3: Type-1 external triangles.}
For each type-1 external triangle $t\in\B^*_{4}$, there are three out-edge in $E_t$, and hence the expected contributed weight is $\sum_{e\in E_t}(w(e)\cdot\PP{e\in\Z})+\EE{w(e_t)}\cdot(1-\sum_{e\in E_t}\PP{e\in\Z})$.
By a similar argument with {Case~2}, the expected contributed weight satisfies 
\[
\sum_{e\in E_t}(w(e)\cdot\PP{e\in\Z})+\EE{w(e_t)}\cdot\lrA{1-\sum_{e\in E_t}\PP{e\in\Z}}> \frac{1}{3}w(t)+\frac{1-3\tau}{6}w(t)\cdot\sum_{e\in E_t}\PP{e\in\Z}.
\]
The total expected contributed weight of all type-1 external triangles is at least 
\[
\frac{1}{3}w(\B^*_{4})+\sum_{t\in\B^*_{4}}\lrA{\frac{1-3\tau}{6}w(t)\cdot\sum_{e\in E_t}\PP{e\in\Z}}.
\]

Therefore, we have that
\begin{align*}
\EE{w(\Z)}>&\ \frac{1}{3}w(\B^*)+\sum_{t\in\B^*_{2},e\in E_t}\lrA{\frac{1-3\tau}{6}w(t)\cdot\PP{e\in\Z}} +\sum_{t\in\B^*_{4}}\lrA{\frac{1-3\tau}{6}w(t)\cdot\sum_{e\in E_t}\PP{e\in\Z}}.
\end{align*}
\end{proof}

Next, we give a lower bound of $\PP{e\in\Z}$ for each type-1 triangle $t$ with $e\in E_t$.

\begin{lemma}\label{p}
For each type-1 triangle $t$ with any $e\in E_t$, it holds that $\PP{e\in \Z}\geq \frac{97}{1215}$.
\end{lemma}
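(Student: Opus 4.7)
The plan is to write
\[
\PP{e\in\Z}=\PP{e\in\Y}\cdot\PP{e\in\Z\mid e\in\Y},
\]
which is valid because $e$ is an out-edge and therefore cannot be inserted by Step~4. I would bound each factor separately.

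For the first factor, let $e=xy$ with $x$ external in $t$ and $y$ the successor of $x$ on the oriented cycle $C\in\C$. Since $x$ is external in $t$, its two incident edges in $t$ are external, so they connect $x$ to vertices on cycles other than $C$; in particular $y\notin t$ and the triangle $t_y$ containing $y$ satisfies $t_y\neq t$. The events $\{x\notin\X\}$ and $\{y\notin\X\}$ therefore depend on independent random draws in Step~1, each with probability $1/3$, and they are exactly the conditions for Step~2 to put $e$ into $\Y$. Hence $\PP{e\in\Y}=1/9$.

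For the second factor I analyze the component of $e$ in $\Y$, which is a sub-structure of $C$. Writing $C=v_0v_1\cdots v_{k-1}v_0$ with $v_0=x$, $v_1=y$, and $e_i=v_iv_{i+1}$ (indices mod $k$), let $L$ and $R$ be the lengths of the maximal runs of consecutive edges of $C$ in $\Y$ immediately left and right of $e_0$. The component is the full cycle when $L+R=k-1$ and a path on $L+R+1$ edges otherwise, and by Step~3 the conditional probability that $e_0\in\Z$ given the component equals $1$ when $L=R=0$, $1/2$ for a longer path or an even cycle, and $(k-1)/(2k)$ for an odd cycle of length $k$. The worst configuration for this conditional probability is when every edge of $C$ is an out-edge of a type-$1$ triangle and all $v_i$ are in distinct triangles. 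Under that configuration, conditional on $e\in\Y$ the indicators $\mathbf{1}\{v_i\notin\X\}$ for $i=2,\ldots,k-1$ are i.i.d.\ Bernoulli$(1/3)$, and a direct enumeration of $(L,R)$ gives $13/18$ for even $k\geq 4$, the value $13/18-(1/3)^{k-2}/(2k)$ for odd $k\geq 5$ (minimized at $k=5$ giving $97/135$), and $7/9$ for the special case $k=3$ that must be handled separately due to wraparound. Multiplying by $1/9$ yields the claimed bound $97/1215$.

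The main obstacle is justifying the worst-case reduction. I would handle it by observing that any edge $e_i$ failing to be an out-edge is deterministically excluded from $\Y$; moreover, whenever two vertices $v_i,v_{i+1}$ on $C$ share a triangle, the edge $v_iv_{i+1}$ is internal to that triangle, forcing $v_i$ to be non-external there and $e_i$ to not be an out-edge. Each such forced cut only redistributes conditional probability mass from the three lossy outcomes (path of $\geq 2$ edges, even cycle, odd cycle, each contributing at most $1/2$) to the isolated-edge outcome (contributing $1$), and in particular eliminates any odd-cycle loss. Hence $\PP{e\in\Z\mid e\in\Y}$ can only increase under any deviation from the worst-case configuration, completing the argument.
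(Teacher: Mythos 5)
Your proposal is correct and amounts to the same core computation as the paper: both factor $\PP{e\in\Z}=\PP{e\in\Y}\cdot\PP{e\in\Z\mid e\in\Y}$, show the first factor equals $1/9$, and establish that the second factor is minimized at the all-external odd $5$-cycle, yielding $13/18-\tfrac{1}{2\cdot 5}(1/3)^3=97/135$ and hence $97/1215$. The organizational difference is that the paper proves the conditional bound by an explicit six-way case analysis over the local configuration of $e$'s neighbors on $C$ (whether $x,z$ are external, whether $|C|=3$, parity, whether some vertex of $C$ is non-external), while you instead compute the bound in a single "fully external, all-distinct" configuration and then argue by monotonicity that any other configuration only helps. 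This buys a cleaner argument, but the monotonicity step as you phrased it is slightly loose in two places. First, "each forced cut only redistributes conditional probability mass from the three lossy outcomes to the isolated-edge outcome" is not quite right: a cut far from $e$ does not push mass onto the isolated-edge outcome at all, it merely kills the odd-cycle outcome (paths of every length $\geq 2$ have the same value $1/2$, so shortening a long path to a shorter path contributes nothing); only cuts at $e_1$ or $e_{-1}$ feed the isolated-edge term. The inequality still holds because both effects are nonnegative, but the stated mechanism is imprecise. Second, you only treat the case of \emph{adjacent} vertices of $C$ sharing a triangle; non-adjacent vertices of $C$ can share a triangle too (producing a chord), and while this also forces both to be non-external and hence produces cuts on both outgoing edges, it introduces a dependency between $\{v_i\notin\X\}$ and $\{v_j\notin\X\}$ that you implicitly assume away when you take the indicators to be i.i.d.; one should note that the cuts prevent the two runs from ever simultaneously requiring both of these events, so the dependency never enters the computation. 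With those two clarifications your argument is complete and equivalent to the paper's.
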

\begin{proof}
We take an arbitrary type-1 triangle $t$ and then consider an arbitrary out-edge $e\in E_t$.
We assume that the out-edge $e$ is on the cycle $C\in\C$ and the corresponding segment of $C$ is $(x,y,z,k)$, where $e=yz$. Note that $x=k$ if $\size{C}=3$. For the sake of presentation, the triangle in $\B^*$ containing a vertex $v$ is denoted by $t_v$. Hence, we have $t=t_y$. See Figure~\ref{fig03} for an illustration.

\begin{figure}[ht]
\centering
\begin{tikzpicture}
\tikzstyle{vertex}=[black,circle,fill,minimum size=5,inner sep=0pt]
\begin{scope}[every node/.style={vertex}]
 \node (x) at (-1,0) {};
 \node (y) at (0,0) {};
 \node (z) at (1,0) {};
 \node (k) at (2,0) {};
\end{scope}

\node[below] at (-1,0) {$x$};
\node[below] at (0,0) {$y$};
\node[below] at (1,0) {$z$};
\node[below] at (2,0) {$k$};
\draw[very thick,-] (y.north) to (0.25,1.25);
\draw[very thick,-] (y.north) to (-0.25,1.25);

\node at (0,1) {\small $t_y$};

\draw[very thick,dotted,->] (x.east) to (y.west);
\draw[very thick,dotted,->,red] (y.east) to (z.west);
\draw[very thick,dotted,->] (z.east) to (k.west);
\end{tikzpicture}
\caption{An illustration of the type-1 triangle $t_y$ and one of its out-edge $e=yz$, where the dotted directed edges are the edges of cycle $C$}
\label{fig03}
\end{figure}

Since $e$ is an out-edge, by the definition, $y$ is an external vertex, and hence $t_y\neq t_z$.
We have $\PP{e\in \Y}= \PP{y\notin \X}\cdot \PP{z\notin \X}=\frac{1}{3}\cdot\frac{1}{3}=\frac{1}{9}$.
Next, we will show that $\PP{e\in \Z\mid e\in \Y}\geq \frac{97}{135}$.

\textbf{Case~1: both $x$ and $z$ are not external vertices.} In this case, we know that $xy\notin\Y$ and $zk\notin\Y$.
If $e\in \Y$, the component containing $e$ is an edge. Hence, $\PP{e\in \Z\mid e\in \Y}=1$.

\textbf{Case~2: one of $x$ and $z$ is an external vertex.}
We assume w.l.o.g. that $x$ is the external vertex and $z$ is not. We have $t_x\neq t_y$ and $zk\notin\Y$. So, $\PP{xy\in \Y\mid e\in \Y}=\frac{1}{3}$.
If $e\in \Y$ and $xy\in \Y$, the component containing $e$ is a path, and then $\PP{e\in \Z\mid e\in \Y \wedge xy\in \Y}=\frac{1}{2}$. If $e\in \Y$ and $xy\notin \Y$, the component containing $e$ is an edge, and then $\PP{e\in \Z\mid e\in \Y \wedge xy\notin \Y}=1$.
Hence, 
\begin{align*}
\PP{e\in \Z\mid e\in \Y}=&\ \PP{e\in \Z\mid xy\in \Y \wedge e\in \Y}\cdot\PP{xy\in \Y\mid e\in \Y}\\
&\  +\PP{e\in \Z\mid xy\notin \Y \wedge e\in \Y}\cdot\PP{xy\notin \Y\mid e\in \Y}\\
=&\ \frac{1}{2}\cdot\frac{1}{3}+1\cdot\frac{2}{3}\\
=&\ \frac{5}{6}.
\end{align*}

\textbf{Case~3: both $x$ and $z$ are external vertices, and $\size{C}=3$.}
We know that $t_x$, $t_y$, and $t_z$ are three different triangles, and then we have $\PP{xy\in \Y\mid e\in \Y}=\frac{1}{3}$. If $e\in \Y$ and $xy\in \Y$, we have $zx\in \Y$, and hence the component containing $e$ is an odd cycle. So, we have $\PP{e\in \Z\mid xy\in \Y\wedge e\in \Y }=\frac{2}{3}\cdot\frac{1}{2}=\frac{1}{3}$.
If $e\in \Y$ and $xy\notin \Y$, we have $zx\notin \Y$, and then the component containing $e$ is an edge. So, we have $\PP{e\in \Z\mid xy\notin \Y\wedge e\in \Y }=1$.
Hence, 
\begin{align*}
\PP{e\in \Z\mid e\in \Y}=&\ \PP{e\in \Z\mid xy\in \Y \wedge  e\in \Y}\cdot\PP{xy\in \Y\mid e\in \Y}\\
&\  +\PP{e\in \Z\mid xy\notin \Y \wedge  e\in \Y}\cdot\PP{xy\notin \Y\mid e\in \Y}\\
=&\ \frac{1}{3}\cdot\frac{1}{3}+1\cdot\frac{2}{3}\\
=&\ \frac{7}{9}.
\end{align*}

\textbf{Case~4: both $x$ and $z$ are external vertices, and $\size{C}\bmod2=0$.} Note that $\size{C}\geq 4$.
We know that $t_x$, $t_y$, $t_z$, and $t_k$ are four different triangles, and then we have $\PP{xy,zk\notin \Y\mid e\in \Y}=\frac{2}{3}\cdot\frac{2}{3}=\frac{4}{9}$.
If $e\in \Y$ and $xy,zk\notin \Y$, the component containing $e$ is an edge, and then we have $\PP{e\in \Z\mid xy,zk\notin \Y\wedge e\in \Y }=1$.
If $e\in \Y$ and $xy\in \Y \vee zk\in \Y$, the component containing $e$ is a path or an even cycle, and then we have $\PP{e\in \Z\mid (xy\in \Y \vee zk\in \Y)\wedge e\in \Y }=\frac{1}{2}$.
Note that $\PP{xy\in \Y \vee zk\in \Y\mid e\in \Y}=1-\PP{xy,zk\notin \Y\mid e\in \Y}=1-\frac{4}{9}=\frac{5}{9}$.
Hence, 
\begin{align*}
\PP{e\in \Z\mid e\in \Y}=&\ \PP{e\in \Z\mid xy,zk\notin \Y \wedge  e\in \Y}\cdot\PP{xy,zk\notin \Y\mid e\in \Y}\\
&\  +\PP{e\in \Z\mid (xy\in \Y \vee zk\in \Y) \wedge  e\in \Y}\cdot\PP{xy\in \Y \vee zk\in \Y\mid e\in \Y}\\
=&\ 1\cdot\frac{4}{9}+\frac{1}{2}\cdot\frac{5}{9}\\
=&\ \frac{13}{18}.
\end{align*}

\textbf{Case~5: both $x$ and $z$ are external vertices, there exists a non external vertex in $C$, $\size{C}>3$, and $\size{C}\bmod2=1$.}
Note that the component containing $e$ is an odd cycle only if $C$ is an odd cycle and all vertices of $C$ are external vertices. Hence, the component containing $e$ cannot be an odd cycle in this case.
The analysis is the same as in Case 4.
We have $\PP{e\in \Z\mid e\in \Y}=\frac{13}{18}$.

\textbf{Case~6: all vertices of $C$ are external vertices, $\size{C}>3$, and $\size{C}\bmod2=1$.} Let $\size{C}=l$.
Each vertex of $C$ corresponds to a different triangle, and then we have $\PP{C\in \Y\mid e\in \Y}=(\frac{1}{3})^{l-2}$, $\PP{xy,zk\notin \Y\mid e\in \Y}=\frac{2}{3}\cdot\frac{2}{3}=\frac{4}{9}$, and $\PP{(xy\in \Y\vee zk\in \Y)\wedge C\notin \Y \mid e\in \Y}=1-(\frac{1}{3})^{l-2}-\frac{4}{9}$.
If $C\in \Y$ (i.e., all edges of $C$ are in $\Y$), the component containing $e$ is an odd cycle, and then we have $\PP{e\in \Z\mid C\in \Y}=\frac{l-1}{l}\cdot\frac{1}{2}$.
If $e\in \Y$ and $xy,zk\notin \Y$, the component containing $e$ is an edge, and then we have $\PP{e\in \Z\mid xy,zk\notin \Y\wedge e\in \Y }=1$.
If $e\in \Y$ and $(xy\in \Y\vee zk\in \Y)\wedge C\notin \Y$, the component containing $e$ is a path, and then we have $\PP{e\in \Z\mid (xy\in \Y\vee zk\in \Y)\wedge C\notin \Y }=\frac{1}{2}$.
Hence, 
\begin{align*}
&\PP{e\in \Z\mid e\in \Y}\\
&=\PP{e\in \Z\mid C\in \Y}\cdot\PP{C\in \Y\mid e\in \Y}\\
&\ \ \  +\PP{e\in \Z\mid xy,zk\notin \Y \wedge  e\in \Y}\cdot\PP{xy,zk\notin \Y\mid e\in \Y}\\
&\ \ \ +\PP{e\in \Z\mid (xy\in \Y\vee zk\in \Y)\wedge C\notin \Y \wedge  e\in \Y}\cdot\PP{(xy\in \Y\vee zk\in \Y)\wedge C\notin \Y\mid e\in \Y}\\
&=\ \frac{l-1}{l}\cdot\frac{1}{2}\cdot\lrA{\frac{1}{3}}^{l-2}+1\cdot\frac{4}{9}+\frac{1}{2}\cdot\lrA{1-\lrA{\frac{1}{3}}^{l-2}-\frac{4}{9}}\\
&=\ \frac{13}{18}-\frac{1}{2l}\cdot\lrA{\frac{1}{3}}^{l-2}\\
&\geq\ \frac{97}{135},
\end{align*}
where we have $l=5$ in the worst case.
\end{proof}

Note that the idea of the randomized matching algorithm in~\cite{DBLP:journals/jco/ChenCLWZ21} is simply to decompose the out-edges in $\Y$ into three matchings and put one into $\Z$, i.e., $\PP{e\in \Z\mid e\in \Y}=\frac{1}{3}$, and then we may only get a probability of $\frac{1}{27}$ for $\PP{e\in \Z}$, which is smaller than ours in Lemma~\ref{p}. Their decomposition is simple because in their algorithm the out-edges may form a multi-graph where parallel edges exist. As a result, the structure is very complicated, and it is not even obvious to get a better decomposition.

\begin{lemma}
It holds that
\begin{align*}
\EE{w(\Z)}>&\ \frac{1}{3}w(\B^*)+\frac{97(1-3\tau)}{7290}w(\B^*_{2}) +\frac{97(1-3\tau)}{2430}w(\B^*_{4}).
\end{align*}
\end{lemma}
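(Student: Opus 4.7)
The plan is simply to substitute the lower bound from Lemma~\ref{p} into the inequality from Lemma~\ref{e} and collect the contributions by triangle type. Lemma~\ref{e} gives
\[
\EE{w(\Z)} > \frac{1}{3}w(\B^*) + \sum_{t\in\B^*_2,\,e\in E_t}\frac{1-3\tau}{6}w(t)\cdot\PP{e\in\Z} + \sum_{t\in\B^*_4}\frac{1-3\tau}{6}w(t)\cdot\sum_{e\in E_t}\PP{e\in\Z},
\]
and for every out-edge $e\in E_t$ of a type-1 triangle $t$, Lemma~\ref{p} yields $\PP{e\in\Z}\geq 97/1215$. Since the stated inequality in Lemma~\ref{e} is already strict, replacing the probabilities with this (non-strict) lower bound preserves strictness.

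Next I would use the structural fact recorded earlier: a partial-external triangle has exactly one external vertex and therefore exactly one out-edge ($\size{E_t}=1$ for $t\in\B^*_2$), while an external triangle has three external vertices and hence three out-edges ($\size{E_t}=3$ for $t\in\B^*_4$). Thus for $t\in\B^*_2$ the inner sum is a single term bounded below by $\frac{97}{1215}$, and for $t\in\B^*_4$ the inner sum $\sum_{e\in E_t}\PP{e\in\Z}$ is bounded below by $3\cdot\frac{97}{1215}=\frac{97}{405}$. Substituting and summing over triangles turns the two correction terms into $\frac{1-3\tau}{6}\cdot\frac{97}{1215}\cdot w(\B^*_2)$ and $\frac{1-3\tau}{6}\cdot\frac{97}{405}\cdot w(\B^*_4)$ respectively.

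Finally I would carry out the arithmetic simplifications $\frac{1}{6}\cdot\frac{97}{1215}=\frac{97}{7290}$ and $\frac{1}{6}\cdot\frac{97}{405}=\frac{97}{2430}$ to obtain exactly the stated bound
\[
\EE{w(\Z)} > \frac{1}{3}w(\B^*) + \frac{97(1-3\tau)}{7290}w(\B^*_2) + \frac{97(1-3\tau)}{2430}w(\B^*_4).
\]
There is no real obstacle here: all the substantive work was done in Lemma~\ref{e} (which isolates the type-1 contributions from the base $\frac{1}{3}w(\B^*)$ term) and in Lemma~\ref{p} (the case analysis on the cycle structure around an out-edge). The present lemma is just the clean combination of the two, so the only care needed is bookkeeping of $\size{E_t}$ for $\B^*_2$ versus $\B^*_4$ and the two routine fraction simplifications.
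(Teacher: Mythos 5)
Your proposal is correct and follows exactly the paper's argument: substitute the uniform lower bound $\PPP{e\in\Z}\geq 97/1215$ from Lemma~\ref{p} into Lemma~\ref{e}, use $\size{E_t}=1$ for $\B^*_2$ and $\size{E_t}=3$ for $\B^*_4$, and simplify the fractions. The only detail the paper states that you leave implicit is the observation that $0\leq\tau\leq 1/3$ guarantees $\frac{1-3\tau}{6}\geq 0$, which is what makes replacing $\PPP{e\in\Z}$ by a lower bound legitimate.
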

\begin{proof}
Recall that $0\leq \tau\leq 1/3$ by the definition. Hence, we have $\frac{1-3\tau}{6}\geq 0$. By Lemmas~\ref{e} and \ref{p}, we have
\begin{align*}
\EE{w(\Z)}>&\ \frac{1}{3}w(\B^*)+\sum_{t\in\B^*_{2},e\in E_t}\lrA{\frac{1-3\tau}{6}w(t)\cdot\PP{e\in\Z}} +\sum_{t\in\B^*_{4}}\lrA{\frac{1-3\tau}{6}w(t)\cdot\sum_{e\in E_t}\PP{e\in\Z}}\\
\geq&\ \frac{1}{3}w(\B^*)+\sum_{t\in\B^*_{2}}\lrA{\frac{1-3\tau}{6}w(t)\cdot\frac{97}{1215}} +\sum_{t\in\B^*_{4}}\lrA{\frac{1-3\tau}{6}w(t)\cdot3\cdot\frac{97}{1215}}\\
=&\ \frac{1}{3}w(\B^*)+\frac{97(1-3\tau)}{7290}w(\B^*_{2}) +\frac{97(1-3\tau)}{2430}w(\B^*_{4}).
\end{align*}
\end{proof}

Recall that $w(\T_2)\geq 2w(\M^*)$ and $w(\M^*)\geq \EE{w(\Z)}$. Then, we have the following lemma.
\begin{lemma}\label{t2+}
The triangle packing $\T_2$ satisfies that
\begin{align*}
w(\T_2)\geq &\ \frac{2}{3}w(\B^*)+\frac{97(1-3\tau)}{3645}w(\B^*_{2}) +\frac{97(1-3\tau)}{1215}w(\B^*_{4}).
\end{align*}
\end{lemma}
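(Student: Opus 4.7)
The plan is to chain together three facts already established in the preceding development, so the main work is bookkeeping rather than new combinatorics.

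First, I would recall the two inequalities derived just above the statement of the randomized matching algorithm. On one hand, the construction of $\T_2$ from $\M^*$ guarantees $w(\T_2)\geq 2w(\M^*)$, obtained by applying the triangle inequality to each completed triangle $xyz\in\T_2$ whose base edge $xy$ lies in $\M^*$. On the other hand, since $\M^*$ is the \emph{maximum} weight matching of size $n/3$ and the random set $\Z$ produced by the four-step procedure is always a matching of size $n/3$ (by the lemma preceding Lemma~\ref{e}), every realization of $\Z$ satisfies $w(\M^*)\geq w(\Z)$; taking expectations gives $w(\M^*)\geq \EE{w(\Z)}$.

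Second, I would invoke the lower bound on $\EE{w(\Z)}$ just proved, namely
\[
\EE{w(\Z)}>\frac{1}{3}w(\B^*)+\frac{97(1-3\tau)}{7290}w(\B^*_{2})+\frac{97(1-3\tau)}{2430}w(\B^*_{4}),
\]
which in turn came from plugging the per-edge probability bound $\PP{e\in\Z}\geq 97/1215$ of Lemma~\ref{p} into the expectation decomposition of Lemma~\ref{e}. This is the only place where the hypothesis $0\leq \tau\leq 1/3$ is used: it ensures that $(1-3\tau)/6\geq 0$, so that discarding lower-order positive contributions (or multiplying them by the probability bound) preserves the inequality direction.

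Finally, I would combine the three pieces in one line:
\[
w(\T_2)\geq 2w(\M^*)\geq 2\,\EE{w(\Z)}> \frac{2}{3}w(\B^*)+\frac{97(1-3\tau)}{3645}w(\B^*_{2})+\frac{97(1-3\tau)}{1215}w(\B^*_{4}),
\]
where the coefficients in the last expression are obtained by doubling $97/7290$ and $97/2430$. There is no real obstacle here — the only thing to double-check is the arithmetic $2\cdot 97/7290 = 97/3645$ and $2\cdot 97/2430 = 97/1215$, and the fact that the strict inequality from Lemma~\ref{e} is preserved through the two non-strict steps.
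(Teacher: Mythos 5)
Your proposal is correct and is essentially identical to the paper's argument: chain $w(\T_2)\geq 2w(\M^*)\geq 2\,\EE{w(\Z)}$, plug in the expectation bound from the preceding lemma, and double the coefficients. The arithmetic checks out, and your observation that the strict inequality from Lemma~\ref{e} survives the two non-strict steps is accurate (and slightly more careful than the paper, which simply states the result with $\geq$).
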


\section{The Third Triangle Packing}\label{third}
Recall that we split triangles into type-1 and type-2 triangles. Especially, the randomized matching in Section~\ref{second} is based on type-1 triangles. However, there may not exist type-1 triangles in the worst case. Then, it comes to the third triangle packing $\T_3$, which uses the property of type-2 triangles. 
The ideas of $\T_3$ are mainly from the randomized algorithm in~\cite{DBLP:journals/jco/ChenCLWZ21}. We will first propose a new randomized algorithm and then derandomize it.


\subsection{The randomized algorithm}
Our new randomized algorithm of $\T_3$ is based on two edge-disjoint matchings $\X$ and $\Y$, where $\X$ is a randomized matching of size $n/3$ on the short cycle packing and $\Y$ is a matching of size at most $n/3$ determined by $\X$. The algorithm of $\X$ in our algorithm is obtained directly from~\cite{DBLP:journals/jco/ChenCLWZ21} but the algorithm of $\Y$ is different: the triangle packing algorithm in~\cite{DBLP:journals/jco/ChenCLWZ21} is randomized even if $\X$ is determined, while our algorithm of $\T_3$ would be deterministic. So, our new randomized algorithm can be derandomized simply by finding a desirable deterministic matching $\X$. 

The algorithm of $\X$ in~\cite{DBLP:journals/jco/ChenCLWZ21} mainly contains four steps.

\medskip
\noindent\textbf{Step~1.} Initialize $\L=\R=\emptyset$.

\noindent\textbf{Step~2.} For each even cycle $C\in\C$, partition it into two matchings, select one matching uniformly at random, and select the edges in the chosen matching into $\L$.

\noindent\textbf{Step~3.} For each odd cycle $C\in\C$, select one edge uniformly at random, delete it, partition the rest edges into two matchings, select one matching uniformly at random, and select the edges in the chosen matching into $\L$ and $\R$ in the following way:
select one edge uniformly at random, put the edge into $\R$, and put the rest edges into $\L$.

\noindent\textbf{Step~4.} Select $\frac{2}{3}\size{\L}$ edges of $\L$ uniformly at random into $\X$, and put all edges of $\R$ into $\X$.
\medskip

The running time is dominated by computing the short cycle packing $\C$, which is polynomial. The randomized matching $\X$ has some good properties. We have the following lemma.

\begin{lemma}[\cite{DBLP:journals/jco/ChenCLWZ21}]~\label{p+}
The size of $\X$ is $n/3$. For every edge $e\in C\in\C$ and every vertex $v\notin C$, we have $\PP{e\in\X}=\frac{1}{3}$ and $\PP{e\in\X\wedge v\notin \X}\geq\frac{1}{9}$.
\end{lemma}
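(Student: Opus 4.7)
The plan is to verify the three claims in turn, exploiting two structural facts: the random choices in Steps~2 and~3 are mutually independent across different cycles of $\C$, and Step~4 is the only step that couples different cycles.

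\emph{Size of $\X$.} I would count the contributions cycle by cycle. An even cycle of length $\ell$ gives $\ell/2$ edges to $\L$ and none to $\R$. An odd cycle of length $\ell$ gives $(\ell-3)/2$ edges to $\L$ and exactly one edge to $\R$: one edge is removed in Step~3, one of the $(\ell-1)/2$ edges of the chosen matching is promoted to $\R$, and the remaining $(\ell-3)/2$ are put into $\L$. If $O$ denotes the number of odd cycles, then $|\L|=(n-3O)/2$ and $|\R|=O$, so $|\X|=\tfrac{2}{3}|\L|+|\R|=\tfrac{n-3O}{3}+O=\tfrac{n}{3}$.

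\emph{Marginal probability.} I would split on the parity of the cycle $C$ containing $e$. If $C$ is even, then $\PP{e\in\L}=\tfrac{1}{2}$ and Step~4 includes $e$ conditional on $e\in\L$ with probability $\tfrac{2}{3}$, so $\PP{e\in\X}=\tfrac{1}{3}$. If $\size{C}=2k+1$, then $e$ survives deletion with probability $\tfrac{2k}{2k+1}$, lies in the chosen matching with probability $\tfrac{1}{2}$, and conditional on this lands in $\R$ (hence $\X$) with probability $\tfrac{1}{k}$ or in $\L$ with probability $\tfrac{k-1}{k}$, from which Step~4 keeps it with probability $\tfrac{2}{3}$. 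Multiplying out yields $\tfrac{k}{2k+1}\cdot\tfrac{1}{k}\bigl(1+\tfrac{2(k-1)}{3}\bigr)=\tfrac{1}{3}$.

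\emph{Joint bound.} Let $C'\ne C$ be the cycle containing $v$, and let $e_1',e_2'$ be its two edges incident to $v$. Since $\X$ is a matching, $\PP{v\in\X}=\PP{e_1'\in\X}+\PP{e_2'\in\X}=\tfrac{2}{3}$, so the claim reduces to proving $\PP{e\in\X\wedge e'\in\X}\le\tfrac{1}{9}$ for every $e'$ in a cycle distinct from $C$; applying this with $e'=e_1'$ and $e'=e_2'$ yields $\PP{e\in\X\wedge v\in\X}\le\tfrac{2}{9}$, hence $\PP{e\in\X\wedge v\notin\X}\ge\tfrac{1}{3}-\tfrac{2}{9}=\tfrac{1}{9}$. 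I would prove the product bound by splitting into the four cases $\{e\in\L,e\in\R\}\times\{e'\in\L,e'\in\R\}$. When at least one of $e,e'$ lies in $\R$, the two factors split by across-cycle independence and the Step~4 selection contributes exactly $\tfrac{2}{3}$ for any edge known to lie in $\L$, which matches the product exactly. In the remaining case $e,e'\in\L$, conditional on $|\L|=L$ the uniform size-$2L/3$ subset contains both with probability $\tfrac{(2L/3)(2L/3-1)}{L(L-1)}\le(\tfrac{2}{3})^2$ by the elementary inequality $2L-3\le 2L-2$; combined with across-cycle independence of $\PP{e\in\L\wedge e'\in\L}$, this again yields the product bound. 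I expect this last case to be the main obstacle, since it is where Step~4 genuinely couples different cycles, and the argument closes only because uniform fixed-size sampling is negatively correlated, which works in our favor.
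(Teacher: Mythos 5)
The paper does not prove Lemma~\ref{p+}; it is cited from Chen et al.\ \cite{DBLP:journals/jco/ChenCLWZ21} without a reproduced argument, so there is no in-paper proof to compare against. Your blind proof is correct. The size count $|\L|=(n-3O)/2$, $|\R|=O$ is right, and the marginal computations for even and odd cycles both evaluate to $\tfrac13$. For the joint bound, the key observations you exploit are exactly the right ones: (i) $|\L|$ is a deterministic quantity determined by $\C$, so Step~4 is uniform sampling of a fixed-size subset; (ii) Steps~2--3 act independently across cycles, so for $e,e'$ in distinct cycles the events $e\in\L$, $e\in\R$, $e'\in\L$, $e'\in\R$ factor; and (iii) fixed-size sampling without replacement is negatively correlated, giving $\PP{e\in\X\wedge e'\in\X\mid e,e'\in\L}\le(2/3)^2$. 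Combining the four $\L/\R$ cases then yields $\PP{e\in\X\wedge e'\in\X}\le\PP{e\in\X}\PP{e'\in\X}=1/9$, and the matching property turns this into $\PP{e\in\X\wedge v\in\X}\le 2/9$, hence $\PP{e\in\X\wedge v\notin\X}\ge 1/9$. One small presentational point: you phrase the last step as ``conditional on $|\L|=L$'', but it is worth stating explicitly that $|\L|$ is constant given $\C$, so this conditioning is vacuous and does not perturb the per-cycle independence you rely on.
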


We remark that if $\C$ is a set of triangles the matching $\X$ is exactly obtained by taking an edge of each triangle uniformly at random. In this case, for every edge $e\in C\in\C$ and every vertex $v\notin C$, we have $\PP{e\in\X}=\frac{1}{3}$ and $\PP{e\in\X\wedge v\notin \X}=\frac{1}{9}$. So, the result in Lemma \ref{p+} is essentially tight.

Next, we introduce the algorithm of the third triangle packing $\T_3$, where we will first compute the matching $\Y$ and then obtain $\T_3$ using $\X$ and $\Y$. We first give some definitions.

For some edge $e=xy\in C\in\C$ and vertex $z\notin C$, we call $(x,y;z)$ a \emph{triplet}. Recall that $\tau$ is a fixed constant defined before. A triplet $(x,y;z)$ is \emph{good} if it satisfies that $w(xy)\leq (1-\tau)(w(xz)+w(yz))$.

We construct a multi-graph $H$. Initially, graph $H$ contains $n$ vertices only and no edges. For each good triplet $(x,y;z)$, we add two edges $xz$ and $yz$ with an augmented weight $\widetilde{w}(xz)=\widetilde{w}(yz)=w(xz)+w(yz)$, and each of them has a label corresponding to the good triplet $(x,y;z)$. Hence, graph $H$ contains only external edges. Two edges of $H$ are called \emph{conflicting} if their corresponding triplets share a common vertex.

The algorithm of $\T_3$ is shown as follows.

\medskip
\noindent\textbf{Step~1.} Find a maximum augmented weight matching $\Y^{*}$ in graph $H$ in $O(n^3)$ time~\cite{gabow1974implementation,lawler1976combinatorial}. Note that there is no constraint on its size, and hence we have $0\leq\size{\Y^{*}}\leq n/2$.

\noindent\textbf{Step~2.} Let $\Y^*_\X$ denote the set of edges $\{zx\in\Y^*\mid xy\in\X,z\notin\X\}$. 
We claim that for each edge in $\Y^*_\X$ there is at most one different edge in $\Y^*_\X$ that are conflicting with it (see Lemma~\ref{conflict}).

\noindent\textbf{Step~3.} For each pair of conflicting edges of $\Y^*_\X$, delete the edge with the less augmented weight (if their augmented weights are the same we may simply delete one of them arbitrarily). The remained matching is denoted by $\Y$. 

\noindent\textbf{Step~4.} Note that $\X\cup\Y$ is a set of components of size $n/3$, which contains $\size{\Y}$ path-components (each contains 3 vertices) and $\size{\X}-\size{\Y}$ edge-components. For each path-component $xyz$, complete it into a triangle. For each edge-component, arbitrarily assign an unused vertex, and complete it into a triangle.
\medskip


We need to prove the claim in {Step~2}.

\begin{lemma}\label{conflict}
For any edge $zx\in \Y^*_\X$, there is at most one different edge in $\Y^*_\X$ that are conflicting with it.
\end{lemma}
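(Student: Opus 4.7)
The plan is to show that two distinct conflicting edges of $\Y^*_\X$ are forced to share the same underlying $\X$-edge, after which the matching property of $\Y^*$ makes uniqueness automatic. So I would fix $zx\in\Y^*_\X$ with triplet $(x,y;z)$ (meaning $xy\in\X$ and $z\notin\X$) and a different $z'x'\in\Y^*_\X$ with triplet $(x',y';z')$ that conflicts with it, i.e.\ $\{x,y,z\}\cap\{x',y',z'\}\neq\emptyset$. The three tools I want to leverage are: $\Y^*$ is a matching in $H$, so $\{z,x\}\cap\{z',x'\}=\emptyset$; $\X$ is a matching in $G$, so no vertex lies in two distinct $\X$-edges; and by definition of $\Y^*_\X$ we have $z,z'\notin\X$ while $x,y,x',y'\in\X$.

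The first step is a short case analysis on the shared vertex $v$. The candidates $v=z=z'$, $v=z=x'$, $v=z'=x$ are all ruled out by $\{z,x\}\cap\{z',x'\}=\emptyset$; and $v=z=y'$ (or symmetrically $v=z'=y$) is ruled out because the left side is outside $\X$ while the right side is an $\X$-endpoint. Thus $v\in\{x,y\}\cap\{x',y'\}$. Next, $v=x=x'$ is blocked by edge-disjointness in $\Y^*$, and $v=y=y'$ is blocked by the matching property of $\X$ (it would force $xy=x'y'$ and hence $x=x'$). In the two surviving sub-cases, $v=y=x'$ forces both $xy$ and $x'y'=yy'$ to meet at $y$ inside $\X$, so $xy=yy'$ and $x=y'$; symmetrically for $v=x=y'$. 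In every case I reach $\{x,y\}=\{x',y'\}$, with necessarily $z\neq z'$.

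To finish, I would observe that $zx\in\Y^*$ already saturates $x$, so the other edge must use the \emph{other} endpoint of the shared $\X$-edge, giving $x'=y$. Since $\Y^*$ is a matching, at most one edge of $\Y^*$ is incident to $y$, hence at most one such $z'y\in\Y^*_\X$ can exist, proving the claim. I do not expect any real obstacle here — the whole argument is bookkeeping — the only mild subtlety is scheduling when to invoke each of the three ingredients (matching in $H$, matching in $G$, and $z,z'\notin\X$); deployed in the right order they eliminate every bad configuration and force the conflicting edges to live on a common $\X$-edge, at which point uniqueness is immediate.
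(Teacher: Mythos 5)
Your proof is correct and follows essentially the same argument as the paper: use vertex-disjointness of $zx$ and $z'x'$ together with $z,z'\notin\X$ and $x,y,x',y'\in\X$ to force the overlap into $\{x,y\}\cap\{x',y'\}$, then use the matching property of $\X$ to conclude $x'=y$, and finally use the matching property of $\Y^*_\X$ at $y$ for uniqueness. Your case analysis is more explicit than the paper's, but the ingredients and their ordering match.
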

\begin{proof}
Consider an edge $zx\in\Y^*_\X$ with its corresponding triplet $(x,y;z)$. Assume that $z'x'\in\Y^*_\X$ with its corresponding triplet $(x',y';z')$ is conflicting with $zx$, i.e., $\{x,y,z\}\cap\{x',y',z'\}\neq\emptyset$. Since $\Y^*_\X$ is a matching, we get that $zx$ and $z'x'$ are vertex-disjoint. Moreover, we have $xy,x'y'\in\X$ and $z,z'\notin\X$ by the definition of $\Y^*_\X$. So, we have $\{x,y\}\cap\{x',y'\}\neq\emptyset$. 
Since $\X$ is a matching, we must have $x'=y$ and $x=y'$. So, $z'x'=z'y$. Since $\Y^*_\X$ is a matching, there is at most one such edge.
\end{proof}

\subsection{The analysis}
\begin{lemma}\label{compute}
$\EE{w(\T_3)}\geq \frac{\tau}{18}\widetilde{w}(\Y^*)+\frac{2}{3}w(\C)$.
\end{lemma}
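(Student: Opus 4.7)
The plan is to decompose the weight of $\T_3$ by the two kinds of components of $\X\cup\Y$, and then take expectations separately for the $\X$--part (which yields $\tfrac{2}{3}w(\C)$ via Lemma~\ref{p+}) and the $\Y$--part (which yields $\tfrac{\tau}{18}\widetilde{w}(\Y^*)$ via the ``good triplet'' definition plus Lemma~\ref{conflict} plus Lemma~\ref{p+}).

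First I would unpack the construction in Step~4. Each path-component consists of an edge $xy\in\X$ and an edge $xz\in\Y$ sharing the vertex $x$, and gets completed to the triangle on $\{x,y,z\}$ of weight $w(xy)+w(xz)+w(yz)=w(xy)+\widetilde{w}(xz)$, where $(x,y;z)$ is the good triplet labelling $xz\in\Y\subseteq\Y^*$. Each edge-component is an edge $xy\in\X$ that gets completed with an unused vertex into a triangle of weight $\geq 2w(xy)$ by the triangle inequality. For the path-component case, since the triplet is good, $w(xy)\leq (1-\tau)\widetilde{w}(xz)$, hence
\[
w(xy)+\widetilde{w}(xz)\;=\;2w(xy)+\bigl(\widetilde{w}(xz)-w(xy)\bigr)\;\geq\;2w(xy)+\tau\,\widetilde{w}(xz).
\]
Summing over all components of $\X\cup\Y$ (noting that every edge of $\X$ contributes $2w(xy)$ exactly once, whether it lies in a path- or edge-component) gives the deterministic bound
\[
w(\T_3)\;\geq\;2w(\X)+\tau\,\widetilde{w}(\Y).
\]

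Next I would take expectations. For the first term, Lemma~\ref{p+} states $\PP{e\in\X}=\tfrac{1}{3}$ for every edge $e$ in $\C$, so $\EE{w(\X)}=\tfrac{1}{3}w(\C)$. For the second term I need to bring $\widetilde{w}(\Y)$ back to $\widetilde{w}(\Y^*)$ in two steps. First, Lemma~\ref{conflict} ensures every edge of $\Y^*_\X$ has at most one conflicting partner in $\Y^*_\X$, so the deletion rule in Step~3 (always discard the lighter augmented-weight edge of each conflicting pair) preserves at least half of the augmented weight: $\widetilde{w}(\Y)\geq\tfrac{1}{2}\widetilde{w}(\Y^*_\X)$. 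Second, for any edge $zx\in\Y^*$ with its (unique) labelling good triplet $(x,y;z)$, inclusion in $\Y^*_\X$ is exactly the event $\{xy\in\X\}\wedge\{z\notin\X\}$, which by Lemma~\ref{p+} has probability at least $\tfrac{1}{9}$. Summing over $\Y^*$ by linearity gives $\EE{\widetilde{w}(\Y^*_\X)}\geq\tfrac{1}{9}\widetilde{w}(\Y^*)$, and hence $\EE{\widetilde{w}(\Y)}\geq\tfrac{1}{18}\widetilde{w}(\Y^*)$.

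Combining the two bounds,
\[
\EE{w(\T_3)}\;\geq\;2\EE{w(\X)}+\tau\,\EE{\widetilde{w}(\Y)}\;\geq\;\frac{2}{3}w(\C)+\frac{\tau}{18}\widetilde{w}(\Y^*),
\]
which is the claim. The main obstacle is really just the bookkeeping in the first step: one has to be careful that the vertex shared by an $\X$-edge and a $\Y$-edge in a path-component is precisely the vertex $x$ in the labelling good triplet $(x,y;z)$ (so that $\widetilde{w}(xz)=w(xz)+w(yz)$ matches the two ``completion'' edges of the triangle), and that no edge of $\X$ is double-counted across components; once this identification is fixed, the ``good triplet'' inequality $w(xy)\leq(1-\tau)\widetilde{w}(xz)$ converts the excess $\widetilde{w}(xz)-w(xy)$ into a clean $\tau\widetilde{w}(xz)$ term that survives all subsequent expectations.
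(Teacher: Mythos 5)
Your proof is correct and follows essentially the same approach as the paper: decompose $w(\T_3)$ over the path- and edge-components of $\X\cup\Y$, use the good-triplet inequality to extract the $\tau\widetilde{w}(\Y)$ term, lose a factor $1/2$ via the conflict-deletion step (Lemma~\ref{conflict}), and then apply Lemma~\ref{p+} to take expectations termwise. The only cosmetic difference is that you rearrange the per-path-component bound $w(xy)+\widetilde{w}(xz)\geq 2w(xy)+\tau\widetilde{w}(xz)$ locally, whereas the paper introduces the notation $\X_\Y$ and does the same rearrangement globally.
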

\begin{proof}
We use $\X_\Y$ to denote the edges of $\X$ contained in the path-components, and hence $\X\setminus\X_\Y$ denote the edges in the edge-components. For the triangles corresponding to the path-components, the weight is exactly $\widetilde{w}(\Y)+w(\X_\Y)$. Note that each triangle is also a good triplet. Hence, we can get $w(\X_\Y)\leq (1-\tau)\widetilde{w}(\Y)$ and in turn $\widetilde{w}(\Y)-w(\X_\Y)\geq \tau\widetilde{w}(\Y)$.
For the triangles corresponding to the edge-components, the weight is at least $2w(\X\setminus\X_\Y)$ by the triangle inequality.
Hence, we have 
\[
w(\T_3)\geq \widetilde{w}(\Y)+w(\X_\Y)+2w(\X\setminus\X_\Y)=\widetilde{w}(\Y)-w(\X_\Y)+2w(\X)\geq \tau\widetilde{w}(\Y)+2w(\X).
\]
Recall that for each pair of conflicting edges we delete the edge with the less augmented weight. Hence, we have $\widetilde{w}(\Y)\geq\frac{1}{2}\widetilde{w}(\Y^*_\X)$, and then 
\[
w(\T_3)\geq \frac{\tau}{2}\widetilde{w}(\Y^*_\X)+2w(\X).
\]

Then, we consider $\EE{w(\X)}$ and $\EE{\widetilde{w}(\Y^*_\X)}$.
By Lemma~\ref{p+}, we have 
\[
\EE{w(\X)}=\sum_{e\in\C}w(e)\cdot\PP{e\in\X}=\frac{1}{3}w(\C)
\]
and
\[
\EE{\widetilde{w}(\Y^*_\X)}=\sum_{\substack{zx\in \Y^*:\\(x,y;z)}}\widetilde{w}(zx)\cdot\PP{xy\in \X\wedge z\notin\X}\geq \sum_{\substack{zx\in \Y^*:\\(x,y;z)}}\frac{1}{9}\widetilde{w}(zx)=\frac{1}{9}\widetilde{w}(\Y^*).
\]
Hence, $\EE{w(\T_3)}\geq \frac{\tau}{18}\widetilde{w}(\Y^*)+\frac{2}{3}w(\C)$.
\end{proof}

Next, we give a lower bound of $\widetilde{w}(\Y^*)$.

\begin{lemma}
It holds that
\[
\widetilde{w}(\Y^*)\geq \sum_{i\in\{3,5\}}\frac{1}{2}w(\B^*_{i}).
\]

\end{lemma}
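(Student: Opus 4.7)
The plan is to exhibit an explicit matching $\Y'\subseteq H$ with $\widetilde{w}(\Y')\geq \frac{1}{2}(w(\B^*_3)+w(\B^*_5))$; the lemma then follows by the maximality of $\Y^*$.  For each type-$2$ triangle $t=xyz\in\B^*_3\cup\B^*_5$, I would associate a good triplet $\sigma_t$ together with an $H$-edge $e_t\in\sigma_t$ whose two endpoints both lie in $t$.  Since the triangles of $\B^*$ are pairwise vertex-disjoint, the collection $\{e_t\}$ will automatically be a matching, and it suffices to show $\widetilde{w}(e_t)\geq \tfrac{1}{2}w(t)$ for every $t$.

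The core observation is a ``sum--$\geq w(t)$'' identity on two candidate triplets attached to the (small) type-$2$ out-edge.  For $t=xyz\in\B^*_3$ with internal edge $xy$ on cycle $C$ and external vertex $z$ on cycle $C'$ with out-neighbor $z'$, I consider the triplets $(z,z';x)$ and $(z,z';y)$, both valid since $x,y\notin C'$.  Writing their augmented weights as $a=w(zx)+w(z'x)$ and $b=w(zy)+w(z'y)$, a one-line calculation gives
\[
a+b=\bigl(w(zx)+w(zy)\bigr)+\bigl(w(z'x)+w(z'y)\bigr)\geq \bigl(w(t)-w(xy)\bigr)+w(xy)=w(t),
\]
using the triangle inequality $w(z'x)+w(z'y)\geq w(xy)$ on $\{z',x,y\}$.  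Assuming w.l.o.g.\ $a\geq b$ yields $a\geq \tfrac{1}{2}w(t)$, and the type-$2$ hypothesis $w(zz')\leq \tfrac{1-\tau}{2}w(t)\leq (1-\tau)a$ then certifies that $(z,z';x)$ is a good triplet.  I set $e_t=zx$, whose augmented weight is $a\geq \tfrac{1}{2}w(t)$ and which uses only $t$'s vertices.

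The external case $t\in\B^*_5$ is strictly parallel.  Letting $xx'$ be the smallest out-edge of $t$ (so $w(xx')\leq \tfrac{1-\tau}{2}w(t)$ by the type-$2$ condition), I take the two triplets $(x,x';y)$ and $(x,x';z)$.  The analogous sum is $\geq \bigl(w(t)-w(yz)\bigr)+w(yz)=w(t)$ via $w(x'y)+w(x'z)\geq w(yz)$, so w.l.o.g.\ the larger augmented weight is $\geq \tfrac{1}{2}w(t)$; combined with $w(xx')/(1-\tau)\leq \tfrac{1}{2}w(t)$, this certifies that the corresponding triplet is good.  Its $H$-edge (either $xy$ or $xz$) uses only $t$'s vertices and contributes $\geq \tfrac{1}{2}w(t)$, so summing over $\B^*_3\cup\B^*_5$ gives $\widetilde{w}(\Y^*)\geq\widetilde{w}(\Y')\geq \tfrac{1}{2}(w(\B^*_3)+w(\B^*_5))$.

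I do not anticipate any substantial obstacle.  The sum-$\geq w(t)$ identity is a single triangle-inequality step, and the type-$2$ bound on the small out-edge is precisely calibrated to upgrade ``the larger augmented weight is $\geq w(t)/2$'' into ``the corresponding triplet is also good.''  The only bookkeeping is to verify that the chosen $H$-edge $e_t$ lies on the w.l.o.g.\ larger side and uses only vertices of $t$, both of which are immediate from the construction.
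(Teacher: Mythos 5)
Your proposal is correct and takes essentially the same approach as the paper: for each type-2 triangle take the small out-edge guaranteed by the type-2 condition, form the two candidate triplets with the remaining two triangle vertices, show by the triangle inequality that the two augmented weights sum to at least $w(t)$, pick the larger one (which is then automatically good), and use vertex-disjointness of $\B^*$ to conclude that the chosen $H$-edges form a matching. The paper handles $\B^*_3$ and $\B^*_5$ in a single unified paragraph while you split them into two parallel cases, but the construction and estimates are identical.
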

\begin{proof}
We will use the edges of type-2 triangles to construct a matching of graph $H$. First, we will show that for each type-2 triangle $t=xyz$ it is incident to a good triplet.

By the definition, there is one edge w.l.o.g. $e=xx'\in E_t$ such that $w(xx')\leq\frac{1}{2}(1-\tau)w(t)$, where $x$ is an external vertex. Note that vertex $y$ or $z$ cannot be on the same cycle with vertex $x$. Hence, there are two triplets $(x,x';y)$ and $(x,x';z)$. We will show that there is a good triplet between them. See Figure~\ref{fig04} for an illustration.

\begin{figure}[ht]
\centering
\begin{tikzpicture}
\tikzstyle{vertex}=[black,circle,fill,minimum size=5,inner sep=0pt]
\begin{scope}[every node/.style={vertex}]
 \node (x) at (0,0) {};
 \node (z) at (-0.5,1.2) {};
 \node (y) at (0.5,1.2) {};
 \node (x'') at (-1,0) {};
 \node (x') at (1,0) {};
\end{scope}
\node at (0,0.8) {$t$};
\node[above] at (0,-0.5) {$x$};
\node[above] at (1,-0.5) {$x'$};
\node[above] at (y) {$y$};\node[above] at (z) {$z$};
\draw[very thick,-] (z.west) to (y.east);
\draw[very thick,-] (x.north) to (y.south);
\draw[very thick,-] (x.north) to (z.south);
\draw[very thick,dotted,->] (x''.east) to (x.west);
\draw[very thick,dotted,->,red] (x.east) to (x'.west);
\end{tikzpicture}
\caption{An illustration of the good triplet of a type-2 triangle $t=xyz$, where the directed dotted edges are the edges of $\C$ and the red directed dotted edge $xx'$ is an out-edge in $E_t$ such that $w(xx')\leq\frac{1}{2}(1-\tau)w(t)$}
\label{fig04}
\end{figure}
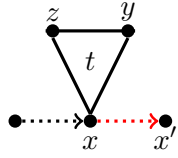

Note that $w(x'y)+w(x'z)\geq w(yz)$ by the triangle inequality. We have 
$
w(xy)+w(x'y)+w(xz)+w(x'z)\geq w(xy)+w(xz)+w(yz)=w(t).
$
Hence, we have $\max\{w(xy)+w(x'y),w(xz)+w(x'z)\}\geq \frac{1}{2}w(t)$. Assume w.l.o.g. that $w(xy)+w(x'y)\geq \frac{1}{2}w(t)$. Since $w(xx')\leq\frac{1}{2}(1-\tau)w(t)$, we have $w(xx')\leq (1-\tau)(w(xy)+w(x'y))$. Hence, $(x,x';y)$ is a good triplet.

According to the good triplet, there is an edge $xy$ in graph $H$ such that $\widetilde{w}(xy)=w(xy)+w(x'y)\geq \frac{1}{2}w(t)$.
By taking the edge $xy$ for each type-2 triangle, we can get a matching of graph $H$ with a weight of at least
\[
\sum_{t\in\B^*_3\cup\B^*_5}\frac{1}{2}w(t)=\sum_{i\in\{3,5\}}\frac{1}{2}w(\B^*_{i}).
\]
Since $\Y^*$ is the maximum weight matching in graph $H$, we can get the lower bound.
\end{proof}

Recall that $w(\C)\geq (1-\varepsilon)w(\C^*)\geq(1-\varepsilon)w(\B^*)$ and $\EE{w(\T_3)}\geq\frac{\tau}{18}\widetilde{w}(\Y^*)+\frac{2}{3}w(\C)$. Then, we have the following lemma.
\begin{lemma}\label{t3}
There is a polynomial-time randomized algorithm that can compute a triangle packing $\T_3$ such that
\[
\EE{w(\T_3)}\geq \frac{2}{3}(1-\varepsilon)w(\B^*)+\sum_{i\in\{3,5\}}\frac{\tau}{36}w(\B^*_{i}).
\]
\end{lemma}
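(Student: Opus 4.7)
The plan is to assemble Lemma~\ref{t3} by stitching together the three ingredients already proved in this section, together with the bound on $w(\C)$ established in Section~\ref{prelim}. The algorithm in question is exactly the one described above: compute the short cycle packing $\C$, run the randomized construction of $\X$ from~\cite{DBLP:journals/jco/ChenCLWZ21}, build the multi-graph $H$, find the maximum augmented weight matching $\Y^*$, extract $\Y^*_\X$ and then $\Y$, and finally complete path-components and edge-components into triangles. Each step runs in polynomial time (the cycle packing and both matching computations are known to run in $O(n^3)$), so the running-time claim is immediate and the only real content is the bound on $\EE{w(\T_3)}$.

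First I would invoke Lemma~\ref{compute}, which already gives
\[
\EE{w(\T_3)}\geq \frac{\tau}{18}\widetilde{w}(\Y^*)+\frac{2}{3}w(\C).
\]
Next I would plug in the bound from the preceding (unnumbered) lemma, $\widetilde{w}(\Y^*)\geq \sum_{i\in\{3,5\}}\frac{1}{2}w(\B^*_{i})$, into the first term, producing the contribution $\sum_{i\in\{3,5\}}\frac{\tau}{36}w(\B^*_i)$. Then I would use the fact recorded in Section~\ref{prelim} that the short cycle packing satisfies $w(\C)\geq(1-\varepsilon)w(\C^*)$, together with $w(\C^*)\geq w(\B^*)$ (since any triangle packing is a cycle packing), to replace $\frac{2}{3}w(\C)$ by $\frac{2}{3}(1-\varepsilon)w(\B^*)$. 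Adding the two contributions yields exactly the claimed inequality.

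There is essentially no obstacle here: the lemma is a one-line combination of previously established estimates, and the only thing to be careful about is making sure $\tau\geq 0$ so that dropping or keeping nonnegative terms goes in the right direction, and that the inequality $w(\C^*)\geq w(\B^*)$ is correctly cited from the preliminaries. Everything else is pure substitution, so the proof should be short and purely arithmetic.
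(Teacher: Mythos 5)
Your proposal is correct and matches the paper's reasoning exactly: the paper also combines Lemma~\ref{compute} with the preceding bound $\widetilde{w}(\Y^*)\geq \sum_{i\in\{3,5\}}\frac{1}{2}w(\B^*_{i})$ and the chain $w(\C)\geq(1-\varepsilon)w(\C^*)\geq(1-\varepsilon)w(\B^*)$ from the preliminaries. Nothing is missing; the lemma really is just this one-line substitution.
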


Next, we show that the algorithm can be derandomized efficiently by the method of conditional expectations~\cite{williamson2011design}.

\subsection{The derandomization}
The third triangle packing $\T_3$ is randomized due to the randomized matching $\X$ on $\C$. Let $f(\X)=\frac{\tau}{2}\widetilde{w}(\Y^*_\X)+2w(\X)$. By the proof of Lemma~\ref{compute}, we have $w(\T_3)\geq f(\X)$ and $\EE{f(\X)}\geq \frac{\tau}{18}\widetilde{w}(\Y^*)+\frac{2}{3}w(\C)$. Therefore, if we can derandomize the matching $\X$ such that $f(\X)\geq \frac{\tau}{18}\widetilde{w}(\Y^*)+\frac{2}{3}w(\C)$, then we will obtain a deterministic algorithm of $\T_3$. 

Recall that the randomized matching algorithm mainly contains two phases. In the first phase, the algorithm obtains two edge sets $\L$ and $\R$ by making random decisions on each cycle in $\C$. After that, the algorithm obtains the matching $\X$ by choosing $\frac{2}{3}\size{\L}$ edges in $\L$ uniformly at random and all edges in $\R$, which can be seen as making random decisions on each edge in $\L$.

Using the method of conditional expectations, these two phases can be derandomized efficiently. The idea is to consider random decisions in the algorithm sequentially. For each random decision, we can compute the expected weight of $f(\X)$ conditioned on each possible outcome of the random decision. Since at least one of these outcomes has an expected weight of at least $\EE{f(\X)}$, we fix this outcome and continue to the next random decision. By repeating this procedure, we can get a deterministic solution with a weight of at least $\EE{f(\X)}$.

We will show that there is a polynomial number of outcomes for each random decision in the algorithm. Moreover, if we can compute the conditional expected weight of $f(\X)$ in polynomial time, then the derandomization will take polynomial time. Since $f(\X)=\frac{\tau}{2}\widetilde{w}(\Y^*_\X)+2w(\X)$, by the linearity, it is sufficient to show that we can compute the conditional expected weight of $\widetilde{w}(\Y^*_\X)$ and $w(\X)$ in polynomial time, respectively. 


Next, we consider the derandomization of these two phases, respectively.

\begin{lemma}\label{derandom1}
The first phase of the algorithm can be derandomized in polynomial time.
\end{lemma}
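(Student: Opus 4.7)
The plan is to apply the method of conditional expectations to the random choices made in phase~1, processing cycles of $\C$ one at a time in some fixed order. For each cycle $C$, phase~1 makes only a constant number of random decisions: one of two matchings for an even cycle, and the deleted edge, one of two matchings on the resulting path, and the $\R$-edge within that matching for an odd cycle. Since $\size{C}\leq 2/\varepsilon$ is bounded, each cycle admits only $O(1)$ possible outcomes. For each candidate outcome I would compute the conditional expectation $\EE{f(\X) \mid \text{phase 1 choices so far}}$, where the remaining expectation is over the still-unprocessed cycles together with the phase~2 random subset, and commit to the outcome that maximizes this quantity. The standard averaging argument guarantees that at least one outcome preserves the conditional expectation, so the final deterministic $(\L,\R)$ satisfies $\EE{f(\X)\mid\L,\R}\geq\EE{f(\X)}$.

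The technical content is showing that each conditional expectation is computable in polynomial time. Since $f(\X)=\frac{\tau}{2}\widetilde{w}(\Y^*_\X)+2w(\X)$, by linearity it suffices to compute, for a given partial state, the probability $\PP{e\in\X}$ for every edge $e\in\C$ and the probability $\PP{xy\in\X\wedge z\notin\X}$ for every triplet $(x,y;z)$ with $zx\in\Y^*$. A crucial observation is that $\size{\L}$ is a deterministic function of $\C$ alone—each even cycle contributes exactly $\size{C}/2$ edges to $\L$ and each odd cycle contributes $(\size{C}-3)/2$—so the phase~2 marginal $\PP{e\in\X\mid e\in\L}=\frac{2\size{\L}/3}{\size{\L}}=\frac{2}{3}$ is a fixed constant, as is the pairwise quantity $\PP{e,e'\in\X\mid e,e'\in\L}=\binom{\size{\L}-2}{2\size{\L}/3-1}/\binom{\size{\L}}{2\size{\L}/3}$, which I would precompute once. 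Because phase~1 choices on different cycles are independent, each probability above depends only on the (already fixed) partial state of at most two cycles or on the constantly-many random outcomes of those cycles if not yet processed.

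Concretely, for a processed edge $e$ the probability $\PP{e\in\X\mid\text{partial}}$ is $0$, $2/3$, or $1$ according to whether $e\notin\L\cup\R$, $e\in\L$, or $e\in\R$ in the fixed partial state, and for an edge in an unprocessed cycle it equals the marginal $1/3$ from Lemma~\ref{p+}. For the joint event $xy\in\X\wedge z\notin\X$, one enumerates the $O(1)$ possible states of $xy$ in its cycle and of $z$'s (at most one) incident $\L\cup\R$-edge in its cycle—weighted by their probabilities under the random choices for any cycle still unprocessed—and combines them using the precomputed constants to account for the negative correlation between two $\L$-edges in the phase~2 subset selection. Each probability is thus computable in $O(1)$ time, giving an $O(n^2)$-time evaluation per candidate outcome and an overall polynomial running time. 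The main obstacle is precisely this subset-selection correlation: events on disjoint cycles are not independent because they share the single uniform subset chosen from $\L$. The saving grace is that $\size{\L}$ is deterministic, so the dependence is absorbed into a handful of explicit binomial coefficients, after which the calculation is routine.
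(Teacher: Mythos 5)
Your proposal is correct and follows essentially the same route as the paper's proof: process cycles sequentially, exploit that $\size{\L}$ is a deterministic function of $\C$ so that the phase-2 marginals and pairwise probabilities reduce to fixed binomial ratios, and compute the conditional expectation of $w(\X)$ and $\widetilde{w}(\Y^*_\X)$ by linearity, conditioning on the $\L/\R$-membership of $xy$ and $z$ and using the independence of phase-1 decisions across cycles. The paper fills in the same structure with explicit formulas (e.g.\ $\PP{e\in\L}=\tfrac{k-3}{2k}$, $\PP{e\in\R}=\tfrac{1}{k}$ for an unprocessed odd cycle of length $k$, and $\size{\L}=\tfrac{1}{2}(n-3\size{\C_o})$), but the decomposition and the key observations are identical to yours.
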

\begin{proof}
We consider random decisions on each cycle in $\C$ sequentially.
Let $\C=\{C_1,\dots,C_{\size{\C}}\}$. For the decision on $C_i\in\C$, it contains $2$ outcomes if $C_i$ is an even cycle, and $\size{C_i}\cdot(\size{C_i}-1)=O_\varepsilon(1)$ outcomes otherwise. 
Next, we show that we can use polynomial time to compute the conditional expected weight of $w(\X)$ and $\widetilde{w}(\Y^*_\X)$. For the sake of presentation, for some variable $\Z$, we simply use $\EE{\Z\mid C_1,\dots,C_s}$ to denote the expected weight of $\Z$ conditioned on the $s$ outcomes of the first $s$ decisions on $C_1,\dots,C_s$. 

\textbf{The conditional expected weight of $w(\X)$.} Since $\X$ is a matching on $\C$, we have
\[
\EE{w(\X)\mid C_1,\dots,C_s}=\sum_{C_i\in\C}\sum_{e\in C_i}\lrA{w(e)\cdot\PP{e\in\X\mid C_1,\dots,C_s}}.
\]
Assume that $e\in C_i\in\C$. 
Since the algorithm selects $\frac{2}{3}\size{\L}$ edges in $\L$ uniformly at random and all edges in $\R$ into $\X$, we can get 
\begin{align*}
&\PP{e\in\X\mid C_1,\dots,C_s}\\
&\ =\frac{2}{3}\cdot\PP{e\in\L\mid C_1,\dots,C_s}+1\cdot\PP{e\in\R\mid C_1,\dots,C_s}+0\cdot\PP{e\notin\L\wedge e\notin\R\mid C_1,\dots,C_s}.
\end{align*}
Hence, it is sufficient to compute the conditional probabilities of events $e\in\L$, $e\in\R$, and $e\notin\L\wedge e\notin\R$. If $i\leq s$, we can determine $\PP{e\in\L\mid C_1,\dots,C_s}$, $\PP{e\in\R\mid C_1,\dots,C_s}$, and $\PP{e\notin\L\wedge e\notin\R\mid C_1,\dots,C_s}$ based on the outcome of the decision on $C_i$. Otherwise, let $k=\size{C_i}$, and by the algorithm, we can compute them as follows:
\[
\left\{
\begin{aligned}
&\PP{e\in\L\mid C_1,\dots,C_s}=\frac{1}{2},\quad k\bmod 2=0,\\ 
&\PP{e\in\R\mid C_1,\dots,C_s}=0,\quad k\bmod 2=0,\\
&\PP{e\notin\L\wedge e\notin\R\mid C_1,\dots,C_s}=\frac{1}{2},\quad k\bmod 2=0,\\
&\PP{e\in\L\mid C_1,\dots,C_s}=\frac{k-3}{2k},\quad k\bmod 2=1,\\
&\PP{e\in\R\mid C_1,\dots,C_s}=\frac{1}{k},\quad k\bmod 2=1,\\
&\PP{e\notin\L\wedge e\notin\R\mid C_1,\dots,C_s}=\frac{k+1}{2k},\quad k\bmod 2=1.
\end{aligned}
\right.
\]
Hence, the conditional expected weight of $w(\X)$ can be computed in polynomial time.

\textbf{The conditional expected weight of $\widetilde{w}(\Y^*_\X)$.} Since $\Y^*_\X\subseteq\Y^*$, we have
\[
\EE{\widetilde{w}(\Y^*_\X)\mid C_1,\dots,C_s}=\sum_{\substack{zx\in \Y^*:\\(x,y;z)}}\lrA{(w(zx)+w(zy))\cdot\PP{xy\in\X\wedge z\notin\X\mid C_1,\dots,C_s}}.
\]
Assume that $xy\in C_i$ and $z\in C_j$, where $i\neq j$ by the definition of the good triplet $(x,y;z)$. There are nine events to consider based on whether the two disjoint sets $\L$ and $\R$ contain $xy$ or $z$. Since the algorithm selects $\frac{2}{3}\size{\L}$ edges in $\L$ uniformly at random and all edges in $\R$ into $\X$, we can get
\begin{align*}
&\PP{xy\in\X\wedge z\notin\X\mid C_1,\dots,C_s}\\
&=\ 0\cdot\PP{xy\notin\L\wedge xy\notin\R\mid C_1,\dots,C_s}+0\cdot\PP{z\in\R\mid C_1,\dots,C_s}\\
&\ +\frac{\binom{\size{\L}-2}{\frac{2}{3}\size{\L}-1}}{\binom{\size{\L}}{\frac{2}{3}\size{\L}}}\cdot\PP{xy\in\L\wedge z\in\L\mid C_1,\dots,C_s}+\frac{2}{3}\cdot\PP{xy\in\L\wedge z\notin\L\wedge z\notin\R\mid C_1,\dots,C_s}\\
&\ +\frac{1}{3}\cdot\PP{xy\in\R\wedge z\in\L\mid C_1,\dots,C_s}+1\cdot\PP{xy\in\R\wedge z\notin\L\wedge z\notin\R\mid C_1,\dots,C_s},
\end{align*}
where $\size{\L}$ is a number related to the numbers of odd cycles in $\C$, which can be computed as follows.
Let $\C_o$ and $\C_e$ be the set of odd cycles and even cycles in $\C$, respectively. By the algorithm, we can get $\size{\L}=\sum_{C \in\C_o}(\frac{1}{2}(\size{C}-1)-1)+\sum_{C \in\C_e}\frac{1}{2}\size{C}=\frac{1}{2}(n-3\size{\C_o})$. The event on $xy$ is independent with the event on $z$ since in the first phase the algorithm makes random decisions on each cycle independently.
Therefore, it is sufficient to compute the probabilities of events $xy\in\L$, $xy\in\R$, $xy\notin\L\wedge xy\notin R$, $z\in\L$, $z\in\R$, and $z\notin\L\wedge z\notin R$ conditioned on $C_1,\dots,C_s$. 
Using a similar argument as in the previous case, we can compute the conditional probabilities of events $xy\in\L$, $xy\in\R$, and $xy\notin\L\wedge xy\notin R$ in polynomial time.
Next, we compute the conditional probabilities of events $z\in\L$, $z\in\R$, and $z\notin\L\wedge z\notin R$. Similarly, if $j\leq s$, we can determine $\PP{z\in\L\mid C_1,\dots,C_s}$, $\PP{z\in\R\mid C_1,\dots,C_s}$, and $\PP{z\notin\L\wedge z\notin\R\mid C_1,\dots,C_s}$ based on the outcome of the decision on $C_j$. Otherwise, let $k=\size{C_j}$, and by the algorithm, we can compute them as follows:
\[
\left\{
\begin{aligned}
&\PP{z\in\L\mid C_1,\dots,C_s}=1,\quad k\bmod 2=0,\\ 
&\PP{z\in\R\mid C_1,\dots,C_s}=0,\quad k\bmod 2=0,\\
&\PP{z\notin\L\wedge z\notin\R\mid C_1,\dots,C_s}=0,\quad k\bmod 2=0,\\
&\PP{z\in\L\mid C_1,\dots,C_s}=\frac{k-3}{k},\quad k\bmod 2=1,\\
&\PP{z\in\R\mid C_1,\dots,C_s}=\frac{2}{k},\quad k\bmod 2=1,\\
&\PP{z\notin\L\wedge z\notin\R\mid C_1,\dots,C_s}=\frac{1}{k},\quad k\bmod 2=1.
\end{aligned}
\right.
\]
Therefore, the conditional expected weight of $\widetilde{w}(\Y^*_\X)$ can also be computed in polynomial time.
\end{proof}

\begin{lemma}\label{derandom2}
The second phase of the algorithm can be derandomized in polynomial time.
\end{lemma}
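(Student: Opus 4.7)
The plan is to apply the method of conditional expectations to the single random step of the second phase, which uniformly selects a subset $S\subseteq\L$ of size $m:=\frac{2}{3}\size{\L}$ so that $\X=S\cup\R$ (the set $\R$ is already fixed after the first phase). We fix any ordering $e_1,\dots,e_{\size{\L}}$ of $\L$ and sequentially decide the status of each $e_i$: if after steps $1,\dots,i-1$ we have committed $j$ edges to $S$ and rejected $i-1-j$, then the residual distribution is uniform over $(m-j)$-subsets of $\{e_i,\dots,e_{\size{\L}}\}$. At step $i$ we compute both conditional expectations $\EE{f(\X)\mid\text{history},\ e_i\in S}$ and $\EE{f(\X)\mid\text{history},\ e_i\notin S}$ and keep the decision yielding the larger value. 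By the tower property this preserves the invariant $\EE{f(\X)\mid\text{current history}}\geq\EE{f(\X)}$, so the final deterministic $\X$ satisfies $f(\X)\geq\EE{f(\X)}$ and, combined with Lemma~\ref{derandom1}, meets the bound $\frac{\tau}{18}\widetilde{w}(\Y^*)+\frac{2}{3}w(\C)$ required by Lemma~\ref{compute}.

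The essential step is to evaluate these conditional expectations in polynomial time. By linearity, $\EE{f(\X)\mid\cdot}=\frac{\tau}{2}\EE{\widetilde{w}(\Y^*_\X)\mid\cdot}+2\EE{w(\X)\mid\cdot}$. The summand $\EE{w(\X)\mid\cdot}=w(\R)+\sum_{e\in\L}w(e)\cdot\PP{e\in S\mid\cdot}$ is immediate: each undecided $e\in\L$ has conditional probability $(m-j)/(\size{\L}-i+1)$ of being selected, while each already-decided edge contributes $0$ or $1$ according to its status.

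For $\widetilde{w}(\Y^*_\X)=\sum_{zx\in\Y^*:\,(x,y;z)}\widetilde{w}(zx)\cdot\mathbf{1}[xy\in\X\wedge z\notin\X]$, the key structural observation is that $\L\cup\R$ is a matching (each cycle in the first phase contributes vertex-disjoint edges, and distinct cycles are vertex-disjoint), so at most one edge $e(z)\in\L\cup\R$ is incident to $z$, and $z\in\X$ if and only if $e(z)\in\X$. Moreover $e(z)\neq xy$ because $z$ lies on a different cycle from $xy$ by the definition of the good triplet. Consequently $\mathbf{1}[xy\in\X\wedge z\notin\X]$ depends on the $S$-status of at most two specified $\L$-edges, after a case split on whether $xy$ and $e(z)$ each lie in $\R$, in $\L$, or in neither (the last two rendering the corresponding condition deterministically true or false). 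Its conditional probability is therefore a ratio of binomial coefficients of the form $\binom{\size{\L}-i+1-\beta}{m-j-\alpha}\big/\binom{\size{\L}-i+1}{m-j}$ for constants $\alpha,\beta\in\{0,1,2\}$, computable in $O(1)$ time; since $\size{\Y^*}=O(n)$, the whole sum is evaluated in polynomial time.

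The main obstacle relative to Lemma~\ref{derandom1} is that uniform $m$-subset selection produces negatively correlated rather than independent inclusions, so joint probabilities of $\L$-edge inclusions must be handled with care. We address this by bookkeeping the committed count $j$ as the procedure progresses and using elementary hypergeometric ratios; because every term of $\widetilde{w}(\Y^*_\X)$ depends on at most two $\L$-edges, only the two-edge joint distribution is ever needed, and the derandomization runs in polynomial time overall.
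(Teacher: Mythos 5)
Your proposal matches the paper's proof: both derandomize the second phase by sequentially fixing the membership of each $\L$-edge, tracking the residual count of edges still to be drawn, and computing the conditional expectations of $w(\X)$ and $\widetilde{w}(\Y^*_\X)$ via hypergeometric (binomial-coefficient) ratios. The paper carries out exactly the case split you describe (on whether $xy$ and $z$ are covered by $\L$, by $\R$, or by neither, further subdivided by whether each relevant $\L$-edge has already been decided), so the two arguments are essentially identical.
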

\begin{proof}
Currently, we have determined the edges in $\L$ and $\R$ using the derandomization of the first phrase. Let $\L=\{e_1,\dots,e_{\size{\L}}\}$. Recall that the matching $\X$ is obtained by selecting $\frac{2}{3}\size{\L}$ edges in $\L$ uniformly at random and all edges in $\R$. 
We consider random decisions on each edge in $\L$ sequentially.
Clearly, for each edge in $\L$ there are only 2 outcomes according to whether it is selected into $\X$ or not. Next, we show that we can use polynomial time to compute the conditional expected weight of $w(\X)$ and $\widetilde{w}(\Y^*_\X)$. We use $b_i=1$ to denote $e_i\in\X$ and $b_i=0$ to denote $e_i\notin\X$. For some variable $\Z$, we use $\EE{\Z\mid b_1,\dots,b_s}$ to denote the expected weight of $\Z$ conditioned on the $s$ outcomes of the first $s$ random decisions on $e_1,\dots,e_s$. Note that the algorithm will select $\frac{2}{3}\size{\L}-b_1-\dots-b_s$ edges from $\L\setminus\{e_1,\dots,e_s\}$ uniformly at random, where $b_1+\dots+b_s\leq \frac{2}{3}\size{\L}$. Once we have determined $\frac{2}{3}\size{\L}$ edges that can be selected into $\X$, i.e., $b_1+\dots+b_s=\frac{2}{3}\size{\L}$ for some $s$, then we are done. Currently, we may assume w.l.o.g. that $b_1+\dots+b_s<\frac{2}{3}\size{\L}$.

\textbf{The conditional expected weight of $w(\X)$.} We have
\[
\EE{w(\X)\mid b_1,\dots,b_s}=\sum_{e\in \L}\lrA{w(e)\cdot\PP{e\in\X\mid b_1,\dots,b_s}}+\sum_{e\in \R}\lrA{w(e)\cdot\PP{e\in\X\mid b_1,\dots,b_s}}.
\]
Consider any edge $e_i\in\L$. If $i\leq s$, the conditional probability of $e_i\in\X$ equals to $b_i$ based on the outcome of the decision on $e_i$. Otherwise, the conditional probability is $(\frac{2}{3}\size{\L}-b_1-\dots-b_s)/(\size{\L}-s)$. Note that for any edge $e\in \R$ the conditional probability of $e\in\X$ is 1. Hence, the conditional expected weight of $w(\X)$ can be computed in polynomial time.

\textbf{The conditional expected weight of $\widetilde{w}(\Y^*_\X)$.} We have
\[
\EE{\widetilde{w}(\Y^*_\X)\mid b_1,\dots,b_s}=\sum_{\substack{zx\in \Y^*:\\(x,y;z)}}\lrA{(w(zx)+w(zy))\cdot\PP{xy\in\X\wedge z\notin\X\mid b_1,\dots,b_s}}.
\]
Similar to the derandomization of the first phrase, there are nine cases (note that we use the term `cases' rather than `events' because in the second phrase the edges in $\L$ and $\R$ have been determined) to consider based on whether the two disjoint sets $\L$ and $\R$ contain $xy$ or $z$. Moreover, if $xy\notin\L\wedge xy\notin\R$ or $z\in\R$, the conditional probability of $xy\in\X\wedge z\notin\X$ is 0. Hence, there are still four cases: $xy\in\L\wedge z\in\L$, $xy\in\L\wedge z\notin\L\wedge z\notin\R$, $xy\in\R\wedge z\in\L$, and $xy\in\R\wedge z\notin\L\wedge z\notin\R$. The decisions on the edges $e_1,\dots,e_s$ in $\L$ have been determined based on the conditions, and the algorithm will select $\frac{2}{3}\size{\L}-b_1-\dots-b_s$ edges from $\L\setminus\{e_1,\dots,e_s\}$ uniformly at random into $\X$. Hence, we need to further consider sub-cases based on whether $\{e_1,\dots,e_s\}$ contains $xy$ or $z$. 

\noindent\textbf{Case~1: $xy\in\L\wedge z\in\L$.} There are four sub-cases based on whether $\{e_1,\dots,e_s\}$ contains $xy$ or $z$.

\noindent\textbf{Case~1.1: $xy,z\notin\{e_1,\dots,e_s\}$.} Since the algorithm selects $\frac{2}{3}\size{\L}-b_1-\dots-b_s$ edges from $\L\setminus\{e_1,\dots,e_s\}$ uniformly at random into $\X$, we can get
\[
\PP{xy\in\X\wedge z\notin\X\mid b_1,\dots,b_s}=\frac{\binom{\size{\L}-s-2}{\frac{2}{3}\size{\L}-b_1-\dots-b_s-1}}{\binom{\size{\L}-s}{\frac{2}{3}\size{\L}-b_1-\dots-b_s}}.
\]
\noindent\textbf{Case~1.2: $xy\notin\{e_1,\dots,e_s\}$ and $z\in\{e_1,\dots,e_s\}$.} Assume that $z\in e_j\in\{e_1,\dots,e_s\}$. If $b_j=1$, we have $z\in \X$, and then $\PP{xy\in\X\wedge z\notin\X\mid b_1,\dots,b_s}=0$. Otherwise, we can get 
\[
\PP{xy\in\X\wedge z\notin\X\mid b_1,\dots,b_s}=\frac{\binom{\size{\L}-s-1}{\frac{2}{3}\size{\L}-b_1-\dots-b_s-1}}{\binom{\size{\L}-s}{\frac{2}{3}\size{\L}-b_1-\dots-b_s}}.
\]
\noindent\textbf{Case~1.3: $xy\in\{e_1,\dots,e_s\}$ and $z\notin\{e_1,\dots,e_s\}$.} Assume that $xy=e_i\in\{e_1,\dots,e_s\}$. If $b_i=0$, we have $xy\notin \X$, and then $\PP{xy\in\X\wedge z\notin\X\mid b_1,\dots,b_s}=0$. Otherwise, we can get 
\[
\PP{xy\in\X\wedge z\notin\X\mid b_1,\dots,b_s}=\frac{\binom{\size{\L}-s-1}{\frac{2}{3}\size{\L}-b_1-\dots-b_s}}{\binom{\size{\L}-s}{\frac{2}{3}\size{\L}-b_1-\dots-b_s}}.
\]

\noindent\textbf{Case~1.4: $xy,z\in\{e_1,\dots,e_s\}$.} Assume that $xy=e_i$ and $z\in e_j$ such that $e_i,e_j\in\{e_1,\dots,e_s\}$. We know that $\PP{xy\in\X\wedge z\notin\X\mid b_1,\dots,b_s}=1$ if and only if $b_i=1$ and $b_j=0$.

\noindent\textbf{Case~2: $xy\in\L\wedge z\notin\L\wedge z\notin\R$.} We get two sub-cases based on whether $\{e_1,\dots,e_s\}$ contains $xy$.

\noindent\textbf{Case~2.1: $xy\notin\{e_1,\dots,e_s\}$.} Since $z\notin\L\wedge z\notin\R$, we have $z\notin\X$. Moreover, since the algorithm selects $\frac{2}{3}\size{\L}-b_1-\dots-b_s$ edges from $\L\setminus\{e_1,\dots,e_s\}$ uniformly at random into $\X$, we can get
\[
\PP{xy\in\X\wedge z\notin\X\mid b_1,\dots,b_s}=\frac{\binom{\size{\L}-s-1}{\frac{2}{3}\size{\L}-b_1-\dots-b_s-1}}{\binom{\size{\L}-s}{\frac{2}{3}\size{\L}-b_1-\dots-b_s}}.
\]
\noindent\textbf{Case~2.2: $xy\in\{e_1,\dots,e_s\}$.} Assume that $xy=e_i\in\{e_1,\dots,e_s\}$. Since $z\notin\X$, we know that $\PP{xy\in\X\wedge z\notin\X\mid b_1,\dots,b_s}=1$ if and only if $b_i=1$.

\noindent\textbf{Case~3: $xy\in\R\wedge z\in\L$.} We get two sub-cases based on whether $\{e_1,\dots,e_s\}$ contains $z$.

\noindent\textbf{Case~3.1: $z\notin\{e_1,\dots,e_s\}$.} Since $xy\in\R$, we have $xy\in\X$. Moreover, since the algorithm selects $\frac{2}{3}\size{\L}-b_1-\dots-b_s$ edges from $\L\setminus\{e_1,\dots,e_s\}$ uniformly at random into $\X$, we can get
\[
\PP{xy\in\X\wedge z\notin\X\mid b_1,\dots,b_s}=\frac{\binom{\size{\L}-s-1}{\frac{2}{3}\size{\L}-b_1-\dots-b_s}}{\binom{\size{\L}-s}{\frac{2}{3}\size{\L}-b_1-\dots-b_s}}.
\]
\noindent\textbf{Case~3.2: $z\in\{e_1,\dots,e_s\}$.} Assume that $z\in e_j\in\{e_1,\dots,e_s\}$. Since $xy\in\X$, we know that $\PP{xy\in\X\wedge z\notin\X\mid b_1,\dots,b_s}=1$ if and only if $b_j=0$.

\noindent\textbf{Case~4: $xy\in\R\wedge z\notin\L\wedge z\notin\R$.} Since $xy\in\R$, we get $xy\in\X$. Moreover, since $z\notin\L\wedge z\notin\R$, we get $z\notin\X$. Therefore, we get $\PP{xy\in\X\wedge z\notin\X\mid b_1,\dots,b_s}=1$.


Hence, the conditional expected weight of $\widetilde{w}(\Y^*_\X)$ can be computed in polynomial time.
\end{proof}

By Lemmas~\ref{derandom1} and \ref{derandom2}, we have the following lemma.
\begin{lemma}\label{t3+}
There is a polynomial-time algorithm that can compute a triangle packing $\T_3$ such that
\[
w(\T_3)\geq \frac{2}{3}(1-\varepsilon)w(\B^*)+\sum_{i\in\{3,5\}}\frac{\tau}{36}w(\B^*_{i}).
\]
\end{lemma}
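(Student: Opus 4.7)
The plan is to observe that Lemma~\ref{t3+} is essentially the derandomized counterpart of Lemma~\ref{t3}, and its proof is a direct composition of the ingredients already established. The starting point is the key inequality extracted from the proof of Lemma~\ref{compute}: setting $f(\X)=\frac{\tau}{2}\widetilde{w}(\Y^*_\X)+2w(\X)$, one has $w(\T_3)\geq f(\X)$ for every outcome of the randomized construction, and $\EE{f(\X)}\geq \frac{\tau}{18}\widetilde{w}(\Y^*)+\frac{2}{3}w(\C)$. Consequently, if we can deterministically produce an $\X$ with $f(\X)\geq \EE{f(\X)}$, the lemma follows by chaining this with $\widetilde{w}(\Y^*)\geq \sum_{i\in\{3,5\}}\frac{1}{2}w(\B^*_i)$ and $w(\C)\geq(1-\varepsilon)w(\B^*)$.

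To produce such an $\X$, I would apply the method of conditional expectations to the two phases of the randomized construction of $\X$ in the order they occur. In the first phase, one sequentially fixes the outcome on each cycle $C_1,\dots,C_{|\C|}\in\C$, each having only $O_\varepsilon(1)$ possible outcomes (two for an even cycle, $|C|(|C|-1)$ for an odd one). In the second phase, one sequentially fixes the bits $b_1,\dots,b_{|\L|}\in\{0,1\}$ indicating whether each edge of $\L$ is selected, each having two outcomes. By linearity, $\EE{f(\X)\mid\cdot}=\frac{\tau}{2}\EE{\widetilde{w}(\Y^*_\X)\mid\cdot}+2\EE{w(\X)\mid\cdot}$, so the conditional expectation is computable in polynomial time provided each of its two pieces is; this is exactly what Lemmas~\ref{derandom1} and \ref{derandom2} furnish. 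At each step we pick the outcome maximizing the conditional expectation, which preserves the invariant that the conditional expectation is at least the original $\EE{f(\X)}$; once all decisions are fixed, the value is $f(\X)$ itself.

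Putting the pieces together gives
\[
w(\T_3)\geq f(\X)\geq \EE{f(\X)}\geq \frac{\tau}{18}\widetilde{w}(\Y^*)+\frac{2}{3}w(\C)\geq \frac{2}{3}(1-\varepsilon)w(\B^*)+\sum_{i\in\{3,5\}}\frac{\tau}{36}w(\B^*_i),
\]
which is the claimed bound. The polynomial running time follows because the two derandomization phases each perform a polynomial number of conditional-expectation evaluations, and each evaluation is polynomial-time by Lemmas~\ref{derandom1} and \ref{derandom2}; the remaining steps (computing $\C$, computing $\Y^*$, and the triangle-completion step) are already polynomial.

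The main obstacle has in fact already been discharged in Lemmas~\ref{derandom1} and \ref{derandom2}: showing that the conditional expectation of $\widetilde{w}(\Y^*_\X)$, which involves correlated events of the form $\{xy\in\X\wedge z\notin\X\}$ across two different cycles, factors cleanly enough to be enumerated in polynomial time. Once that is in place, the present lemma is a short assembly, so at the level of this proof the only thing to verify is that the two derandomization phases can be chained (the first fixes $\L$ and $\R$, after which the conditional-expectation formulas in the second phase apply verbatim) and that no additional source of randomness remains in the overall algorithm of $\T_3$.
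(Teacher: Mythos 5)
Your proposal is correct and matches the paper's argument: both derandomize the two-phase construction of $\X$ via conditional expectations using Lemmas~\ref{derandom1} and~\ref{derandom2}, then chain $w(\T_3)\geq f(\X)\geq\EE{f(\X)}$ with the lower bounds $\widetilde{w}(\Y^*)\geq\frac{1}{2}\sum_{i\in\{3,5\}}w(\B^*_i)$ and $w(\C)\geq(1-\varepsilon)w(\B^*)$. The paper simply states the lemma as an immediate consequence of the preceding two derandomization lemmas, and your write-up fills in the same assembly explicitly.
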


\section{The Trade-off}\label{analysis}
We are ready to make a trade-off among the three triangle packings: $\T_1$, $\T_2$, and $\T_3$.

We first introduce some new parameters.
We define $\alpha_i=w(\B^*_i)/w(\B^*)$, which measures the weight proportion of the triangles in $\B^*_i$ compared to the triangles in $\B^*$. Note that
\begin{equation}\label{lp1}
\alpha_1+\alpha_2+\alpha_3+\alpha_4+\alpha_5=1.
\end{equation}
Then, we define $\rho_i=\sum_{t\in\B^*_i}w(a_t)/w(\B^*)$, $\sigma_i=\sum_{t\in\B^*_i}w(b_t)/w(\B^*)$, and $\theta_i=\sum_{t\in\B^*_i}w(c_t)/w(\B^*)$. Recall that $\sum_{t\in\B^*_i}w(a_t)=\frac{u_i}{u_i+v_i+1}w(\B^*_i)=\frac{u_i}{u_i+v_i+1}\alpha_iw(\B^*)$.
Hence, we can get that $\rho_i=\frac{u_i}{u_i+v_i+1}\alpha_i$.
Analogously, we have $\sigma_i=\frac{v_i}{u_i+v_i+1}\alpha_i$ and $\theta_i=\frac{1}{u_i+v_i+1}\alpha_i$. Hence, we have
\begin{equation}\label{lp2}
\rho_i+\sigma_i+\theta_i=\alpha_i,\quad i\in\{1,2,3,4,5\}.
\end{equation}
By Lemmas \ref{t1}, \ref{t2}, \ref{t2+}, and \ref{t3+}, we can get that
\begin{equation}\label{lp3}
w(\T_1)/w(\B^*)\geq\alpha_1+2\rho_2+2\rho_3,
\end{equation}
\begin{equation}\label{lp4}
w(\T_2)/w(\B^*)\geq 2\theta_1+2\theta_2+2\theta_3+2\theta_4+2\theta_5,
\end{equation}
\begin{equation}\label{lp5}
w(\T_2)/w(\B^*)\geq\frac{2}{3}+\frac{97(1-3\tau)}{3645}\alpha_2+\frac{97(1-3\tau)}{1215}\alpha_4,
\end{equation}
\begin{equation}\label{lp6}
w(\T_3)/w(\B^*)\geq \frac{2}{3}(1-\varepsilon)+\frac{\tau}{36}\alpha_3+\frac{\tau}{36}\alpha_5.
\end{equation}
For each triangle $t$, we have $w(c_t)\geq w(b_t)\geq w(a_t)$ by the definition and $w(a_t)+w(b_t)\geq w(c_t)$ by the triangle inequality.
Hence, we also have
\begin{equation}\label{lp7}
\rho_i+\sigma_i\geq\theta_i\geq\sigma_i\geq\rho_i\geq 0,\quad i\in\{1,2,3,4,5\}.
\end{equation}
If $\tau$ is fixed, using (\ref{lp1})-(\ref{lp7}) the approximation ratio $\frac{\max\{w(\T_1),\ w(T_2),\ w(\T_3)\}}{w(\B^*)}$ can be obtained via solving a linear program (see Appendix~\ref{A.1}).

Setting $\tau=0.25$, we can get an approximation ratio of at least $(0.66835-\varepsilon)$. Hence, we have the following theorem.
\begin{theorem}
For MWMTP with any constant $\varepsilon>0$, there is a polynomial-time $(0.66835-\varepsilon)$-approximation algorithm.
\end{theorem}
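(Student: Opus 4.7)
The plan is to express the worst-case approximation ratio as the optimal value of a linear program in the parameters $\alpha_i,\rho_i,\sigma_i,\theta_i$, plug in $\tau=0.25$, and verify that its optimum is at least $0.66835-\varepsilon$.

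First, I would normalize so that $w(\B^*)=1$, in which case the achieved ratio is exactly $\max\{w(\T_1),w(\T_2),w(\T_3)\}$ and each of the lower bounds (\ref{lp3})--(\ref{lp6}) becomes a linear lower bound on $w(\T_j)$ in the parameters $\alpha_i,\rho_i,\sigma_i,\theta_i$. Call the three right-hand sides $L_1,L_2,L_2',L_3$. The worst-case ratio is then the minimum over all feasible parameter settings (i.e.\ satisfying (\ref{lp1}), (\ref{lp2}), (\ref{lp7}), together with $\alpha_i,\rho_i,\sigma_i,\theta_i\geq 0$) of the quantity $\max\{L_1,L_2,L_2',L_3\}$. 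Introducing a scalar $r$, this min--max is equivalent to the linear program
\[
\min\; r \quad \text{s.t.}\quad r\geq L_1,\; r\geq L_2,\; r\geq L_2',\; r\geq L_3,\; \text{and (\ref{lp1}), (\ref{lp2}), (\ref{lp7}) hold.}
\]
Everything here is linear in the decision variables $r,\alpha_i,\rho_i,\sigma_i,\theta_i$ once $\tau$ is fixed (note that (\ref{lp2}) is an equality, (\ref{lp7}) is a chain of inequalities, and all bounds are non-negative).

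Next, I would fix $\tau=0.25$, which makes the coefficients in (\ref{lp3})--(\ref{lp6}) numerical, and argue that the LP can be solved in constant time (its size is independent of $n$). The optimum value $r^*$ is a lower bound, up to the additive $\varepsilon$ loss coming from (\ref{lp6}), on $\max\{w(\T_1),w(\T_2),w(\T_3)\}/w(\B^*)$; so it suffices to check $r^*\geq 0.66835$. At this point the proof reduces to quoting the LP solution (Appendix~\ref{A.1}), which pinpoints the tight configuration: roughly, the adversary's only degrees of freedom are distributing weight among $\B^*_1,\ldots,\B^*_5$ and setting the $u_i,v_i$ ratios, so one expects the worst case to equalize the bounds from $\T_1$, $\T_2$ (simple), $\T_2$ (type-1 refinement), and $\T_3$ simultaneously, and a short case analysis of which constraints in (\ref{lp7}) are tight pins down the extremal point.

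The main obstacle is verifying the precise numerical value $0.66835$; this is really an optimization exercise rather than a conceptual one, and the hard part is choosing $\tau$ so that the two ``novel'' improvements (the $\frac{97(1-3\tau)}{3645}\alpha_2+\frac{97(1-3\tau)}{1215}\alpha_4$ term from Lemma~\ref{t2+} and the $\frac{\tau}{36}(\alpha_3+\alpha_5)$ term from Lemma~\ref{t3+}) are simultaneously exploitable. A sensitivity check on $\tau$ near $0.25$ would confirm it is the optimal choice within $[0,1/3]$; perturbing $\tau$ upward strengthens the $\T_3$ bound but weakens the $\T_2$ refinement, and vice versa, so the minimax ratio is balanced precisely at this value. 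Collecting these pieces yields the claimed polynomial-time $(0.66835-\varepsilon)$-approximation and completes the proof.
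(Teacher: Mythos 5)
Your proposal is correct and follows essentially the same approach as the paper: both fix $\tau=0.25$ and reduce the trade-off among $\T_1,\T_2,\T_3$ (via the constraints (\ref{lp1})--(\ref{lp7})) to the constant-size linear program of Appendix~\ref{A.1}, whose optimum the paper reports as at least $0.668357$. The paper, like your sketch, defers the numerical verification of this optimum (and of $\tau=0.25$ being the balancing choice) to an LP solver, so there is no substantive difference.
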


\section{Conclusion}\label{conclusion}
In this paper, we consider approximation algorithms for the maximum weight metric triangle packing problem. This problem admits an almost-trivial $2/3$-approximation algorithm~\cite{hassin1997approximation1}.
The first nontrivial result, given by Chen et al.~\cite{DBLP:journals/jco/ChenCLWZ21}, is a randomized $(0.66768-\varepsilon)$-approximation algorithm.
Based on novel modifications, deep analysis, and conditional expectations, we propose a deterministic $(0.66835-\varepsilon)$-approximation algorithm. Whether it admits a simple algorithm with a better-than-$2/3$-approximation ratio is still unknown.

In our analysis of the first triangle packing, the internal edges of the partial-external triangles contained in the short cycle packing are all the least weighted edges. If we take care of this special case, we may also obtain some tiny improvements.

In the future, it would be interesting to study the well-related maximum weight metric 3-path packing problem, where we need to find a set of $n/3$ vertex-disjoint $3$-paths with the total weight maximized. This problem admits a similar almost-trivial $3/4$-approximation algorithm (see~\cite{li2023cyclepack}). However, it is still unknown to obtain a nontrivial approximation algorithm for this problem.

\section*{Acknowledgments}
The work is supported by the National Natural Science Foundation of China, under grant 62372095.

\section*{Declaration of competing interest} 
The authors declare that they have no known competing financial interests or personal relationships that could have appeared to influence the work reported in this paper.

\bibliographystyle{plain}
\bibliography{main}

\appendix

\section{The Linear Program}\label{A.1}
We consider that $\varepsilon$ is a very small constant. Then, the linear program can be built as follows.
\begin{alignat}{2}
\min\quad &y \nonumber \\
\mbox{s.t.}\quad 
&\alpha_1+\alpha_2+\alpha_3+\alpha_4+\alpha_5=1, \nonumber\\
&\rho_i+\sigma_i+\theta_i=\alpha_i,\quad i\in\{1,2,3,4,5\}, \nonumber\\
&y\geq\alpha_1+2\rho_2+2\rho_3, \nonumber\\
&y\geq 2\theta_1+2\theta_2+2\theta_3+2\theta_4+2\theta_5, \nonumber\\
&y\geq\frac{2}{3}+\frac{97(1-3\tau)}{3645}\alpha_2+\frac{97(1-3\tau)}{1215}\alpha_4, \nonumber\\
&y\geq\frac{2}{3}+\frac{\tau}{36}\alpha_3+\frac{\tau}{36}\alpha_5, \nonumber\\
&\rho_i+\sigma_i\geq\theta_i\geq\sigma_i\geq\rho_i\geq0,\quad i\in\{1,2,3,4,5\}. \nonumber
\end{alignat}
Setting $\tau=0.25$, the value of this LP is at least $0.668357$. Hence, we obtain an approximation ratio of $(0.66835-\varepsilon)$ for any constant $\varepsilon>0$.\footnote{In fact, we can directly obtain an approximation ratio of 0.66835 by setting $\varepsilon$ as a sufficiently small constant.}
\end{document}